\newcommand{\columnsversion}[2]{#2}{}  
\newtheorem{theorem}{Theorem}
\newtheorem{lemma}[theorem]{Lemma}
\newtheorem{example}[theorem]{Example}
\newtheorem{corollary}[theorem]{Corollary}
\def\qedsymbol{$\square$}
\newcommand{\bg}[1]{\medskip\noindent{\textsc{#1}}}
\newcommand{\ed}{{\hfill\qedsymbol}\medskip}
\newenvironment{proofof}[1]{\bg{Proof of #1 : }}{\ed}
\newenvironment{proof}{\bg{Proof : }}{\ed}
\newcommand{\email}[1]{\texttt{#1}}
\newcommand{\R}{\ensuremath{\mathbb R}}
\newcommand{\Z}{\ensuremath{\mathbb Z}}
\newcommand{\comment}[1]{}
\newcommand{\E}{\mathbb{E}}
\newcommand{\abs}[1]{\vert{#1}\vert}
\newcommand{\Prob}{\mathbb{P}}
\newcommand{\argmax}{\operatorname{argmax}}
\newcommand{\nash}{\textsc{Nash}}
\newcommand{\mnash}{\textsc{mNash}}
\newcommand{\cnash}{\textsc{cNash}}
\newcommand{\pareto}{\textsc{Par}}
\newcommand{\one}{\ensuremath{\mathbb{I}}}
\begin{document}

\columnsversion{}{
\setcounter{page}{0}
}

\title{Price Competition in Online Combinatorial Markets}
%
%
%
%
%

\columnsversion{
\numberofauthors{3}
}{}
\author{
\columnsversion{\alignauthor}{} Moshe Babaioff\\	
       \columnsversion{\affaddr}{}{Microsoft Research}\\
       \email{moshe@microsoft.com}
\columnsversion{\alignauthor}{\and} Noam Nisan\\
       \columnsversion{\affaddr}{}{Microsoft Research and HUJI}\\
       \email{noamn@microsoft.com}
\columnsversion{\alignauthor}{\and} Renato Paes
Leme\columnsversion{\titlenote}{\footnote}{This work was
done while the author was a post-doctoral researcher at Microsoft Research
Silicon Valley.}\\
       \columnsversion{\affaddr}{}{Google Research NYC}\\
       \email{renatoppl@google.com}
}
\columnsversion{}{
\date{}
}

\maketitle

\begin{abstract}
We consider a single buyer with a combinatorial preference that would like to
purchase related products and services from different vendors,
where each vendor supplies exactly one product.
We study the general case where subsets of products can be substitutes as well as complementary and
analyze the game that is induced on the vendors, where a vendor's strategy is the price that he asks
for his product.
This model generalizes both Bertrand competition (where vendors are perfect substitutes) and Nash
bargaining (where they are perfect complements), and captures a wide variety of scenarios that can appear
in complex crowd sourcing or in automatic pricing of related products.

We study the equilibria of such games and show that a pure efficient equilibrium always exists.
In the case of submodular buyer preferences we fully characterize the set of pure Nash equilibria,
essentially showing uniqueness. For the even more restricted
``substitutes'' buyer preferences we also prove uniqueness over {\em mixed} equilibria.
Finally we begin the exploration of natural generalizations of our setting such
as when services have costs, when there are multiple buyers or uncertainty about
the the buyer's valuation, and when a single vendor supplies multiple products.

\end{abstract}

\columnsversion{
\category{J.4}{Social and Behavioral Sciences}{Economics}
\category{F.2.2}{Analysis of Algorithms and Problem Complexity}{Nonnumerical Algorithms and Problems}

\terms{games, economics, algorithms}
}{
\newpage}

\section{Introduction}


It is a common practice in electronic commerce for sellers to use algorithmic
techniques to automatically adjust the price of goods and services in response
to prices posted by other sellers. Such practices are specially in evidence
when malfunctioning algorithms lead to rather amusing results: Eisen
\cite{eisen_blog}
observed that the biology book 'The Making of a Fly' was priced at 23.6 million
dollars by a seller in Amazon and at 18.6 million dollars by another seller.
By observing how prices changed over time, Eisen concluded that this result was
reached by sellers recurrently applying price update rules $p_1 \leftarrow .998
\cdot p_2$ and $p_2 \leftarrow 1.27 \cdot p_1$, without taking the valuation of the prospective buyers into account.
Prices reached these high values as after each time both sellers updated their prices,
the prices grew by a factor of $.998 \cdot 1.27 > 1$.

Such absurd outcomes are uncommon (but this example is by no means unique) and are often the result of
faulty algorithms, but they hint at at a complex game between different sellers
that adjust their prices based on the price of their competitors and based on
the dependencies among products (substitutabilities and complementarities).
The fact that sellers frequently update their prices in response to other
sellers and the resulting fluctuations have been traditionally observed in
financial markets as well as in prices for airline tickets. A recent news
article \cite{wsj_article} observed that the same technique is being adopted by
online retailers for more mundane consumer goods. The article tracks the price
of a microwave across three different online retailers: Amazon, Best Buy and
Sears. They observed the price being changed $9$ times during the day by Amazon
and twice by Best Buy, in response to Amazon prices.

Our goal in this paper is to formally model the \emph{pricing game}
played by the sellers and study properties of its equilibria as a function of
the dependencies among goods/services offered by the sellers.
Such phenomena are not an exclusive feature of internet markets. Traditional
economic models such as Bertrand competition \cite{bertrand} and Nash Bargaining
\cite{nash_bargaining}
explore versions of this question. Two aspects of online markets, however, are
particularly relevant. The first is operational: the internet provides
quick access to the information on prices of other sellers and allows sellers (or software agents acting on their behalf) to
respond in real-time. The second aspect is structural: products and services
exhibit complex dependencies. We exemplify below some situations in which such
dependencies arise:

\begin{itemize}
 \item \emph{cloud services}: building a cloud service involves assembling
various components such as storage, databases, bandwidth, etc. . Each of those
is available through a variety of vendors. Complex dependencies arise from the
fact that some product are more compatible with each other and some are less.
For example, a certain database might place some requirements on storage and
bandwidth, making some combinations of services infeasible.
 \item \emph{information and data streams}: it is common for internet
advertisers to use behavioral and demographic information about the users to
bid on ads and to choose which content to display. This information is
typically purchased from third party data providing agencies who collect and
curate databases for this purpose. Each data providing agency has a partial
view of the user and complex queries often require advertisers to purchase data
from multiple vendors. Two pieces of data might exhibit subtitutability (e.g.
if they provide the same information about the user) or might be complements
(e.g. if they have a common attribute that allows the advertiser to perform a
\emph{join} and link two different views about the same user).
 \item \emph{crowdsourcing and online marketplaces}: online platforms such as
oDesk allow workers to post hourly wages and buyers to assemble teams of
workers with different skills. Workers with similar skills are substitutes
while two workers that are in same geographic location and can more easily work
together might be complements.
 \item \emph{routing}: given a source and a destination in a network, one needs
to buy a path connecting them. Two edges needed to complete a path are
complements. Two parallel edges are perfect substitutes. This corresponds to
the path auctions of Immorlica et al \cite{immorlica05}.
\end{itemize}

{\bf Our results.}
We first define a simple model of a \emph{pricing game}, in which we wish to
capture the essential aspects of the competition among sellers. Our basic game
consists of $n$ sellers and one buyer: the buyer has a public\footnote{the case in which there is uncertainty about the buyer's valuation is briefly discussed in Section~\ref{subsec:value_uncertainty}.} combinatorial valuation
over the items and reacts to item prices by purchasing the bundle of
items/services that maximizes his utility (total value minus total price),
breaking ties according to a fixed rule. The sellers are strategic agents and
each seller's strategy consists of setting a price for the service he is providing.

Our first result is the existence of pure Nash equilibria for every
combinatorial valuation if the buyer breaks ties maximally, i.e., favoring
supersets. For arbitrary tie breaking rules, we show existence of
$\epsilon$-Nash equilibria for all $\epsilon > 0$. {Both results rely on a structural characterization of the set of equilibria.  }

Although pure Nash equilibrium always exists, it might fail to be unique, as in the case of Nash Bargaining.
{When two sellers are perfect substitutes from the buyer's perspective, the classical Bertrand competition model asserts that the unique equilibrium consists of both buyers pricing at zero. For combinatorial valuations, are there properties encoding that items are \emph{substitutes} that lead to uniqueness of equilibrium ? We explore four common notions of substitutes in economics and study their effect on the set of equilibria of the pricing game. In increasing generality, we study the following models of substitutability: gross substitutability (a traditional notion of substitutes in economics due to Kelso and Crawford \cite{KelsoCrawford}), submodularity, XOS and subadditivity. We refer to \cite{LehmannLehmannNisan} for an extensive discussion on the various notions of substitutability. }

{First we show that for subadditive and XOS valuations, utilities of players might not be unique across all pure Nash equilibria.} If the buyer valuation is submodular, however, each player
has the same utility at every pure Nash equilibrium.\footnote{The equilibrium might fail to be unique, but only because sellers with zero utility have different strategies resulting with that utility.}
For the subclass of gross substitutes valuations a stronger claim is true:
utilities are the same across all mixed Nash equilibrium, under mild assumptions on the tie breaking rules
used by the buyer.

{\bf Extensions.} We extend our basic model in three different directions, relaxing assumptions made on the basic model:
\begin{itemize}
\item \emph{service costs}: we consider each seller as having a cost to perform the service if selected by the buyer. The utility of the seller then becomes, zero if not selected and the price posted minus his cost if selected. We show that the existence of equilibrium still holds in this setting, moreover, there is always an equilibrium maximizing the total welfare of the system, which is defined by the total value obtained by the buyer minus the total cost by the sellers. In other words, the Price of Stability\footnote{the Price of Stability measures the ratio between the optimal welfare achievable and the optimal welfare in equilibrium. The Price of Anarchy measures the ratio between the optimal welfare achievable and the worse welfare in equilibrium.} is $1$. We show that if valuations are gross substitutes, all equilibria are welfare maximizing, i.e., the Price of Anarchy is $1$. For submodular functions, however, we show that inefficient equilibria might arise. Moreover, we show that if there are costs,
 even if the valuation is submodular, the utilities in equilibria might not be unique.

\item \emph{value uncertainty and multiple buyers}: we relax the assumption that the valuation of the buyer is public knowledge of the sellers, and assume instead that sellers only know that the valuation is drawn from a certain distribution. This framework also captures the model where there are multiple buyers and sellers are not allowed to price-discriminate between them. We show that pure Nash equilibrium might not exist in this setting. Moreover, not even $\epsilon$-Nash are guaranteed to exist for small $\epsilon > 0$.

\item \emph{multiple services per seller}: in the basic model, we assumed that each service is controlled by a single seller. We consider the model where each seller sets item prices and collects utility from potentially many items. We note that even the special case of a \emph{monopolist seller}, where there is only one seller controlling all items, is quite non-trivial. We show that for this case, both the Price of Anarchy and Price of Stability are $\Theta(\log n)$. Moreover, the problem of best-responding in this game is a generalization of the Unique Coverage problem, which was shown to be $O(\log n)$-hard to approximate by Demaine et al \cite{Demaine08}. In this paper, here we provide an $O(\log n)$ approximation to the pricing problem faced by the monopolistic seller for the case where the buyer's valuation is submodular.
\end{itemize}

{\bf Conclusions and Open Problems.}
In this paper we propose a model for the price competition in combinatorial
markets generalizing traditional models such as Bertrand Competition and Nash
Bargaining to scenarios with more complex dependencies over goods. Through our
model we seek to understand which conditions are necessary so that the price
competition leads to an equilibrium and also under which conditions is this
equilibrium unique or efficient.

The lack of equilibrium where there is uncertainty about the valuation of the
buyer (Section \ref{subsec:value_uncertainty}) offers a possible explanation for
the price cycles observed in \cite{wsj_article}. Given such interpretation, it
leads to the interesting open question of trying to learn the belief of the
sellers about buyers based on their history of price updates. Our extensions
also lead to other open questions:
\begin{itemize}
\item bounding the Price of Anarchy over the set of mixed Nash equilibria when there is uncertainty about the buyer valuation.
\comment{\item bounding the Price of Anarchy over pure Nash equilibria when
there are service costs and the buyer's value is submodular.}
\item characterizing the set of Nash equilibria when there are multiple  services per seller beyond the monopolistic seller case.
\end{itemize}

{{\bf Related Work}
Our work situates in the broader agenda of understanding the structure of Nash
equilibria of games motivated by auction mechanisms -- both direct and reverse.
Closely related to our work is the study of pay-your-bid mechanisms for the 
reverse auction settings, in which a buyer solicits price quotes from different
sellers and decides to buy from a certain subset based on the quotes received.
The same problem is often phrased as hiring a team of agents, in which a firm
solicits wages from different workers and chooses a subset of them to hire. 
Immorlica et al \cite{immorlica05,immorlica10} study the Nash equilibria for
procurement settings in which a firm needs to choose a subset of workers subject
to some  feasibility constraint. Their model differs from ours in two aspects:
(i) while in our paper the buyer has a value for each subset and chooses the one
maximizing his utility, in \cite{immorlica10} the buyer is \emph{required} to buy some
feasible set. In other words, their setting corresponds to the special case of
our setting where the value of the buyer is either $-\infty$ for infeasible sets
or $\infty$ for feasible sets. (ii) their focus is on bounds on the total
payment of the buyer (which is commonly referred as \emph{frugality}), while our
focus is on properties of the equilibrium and on measuring \emph{welfare}.
Welfare makes sense in our model since there are quantitative values associated
with each outcomes rather then feasibility constraints.

{For direct auction settings, in which a seller holds a set of items and
solicits bids from various buyers on those items, Bernheim and Whinston
\cite{BernheimWhinston} study the set of equilibria of an auction where agents
bid on generic outcomes (menu auctions). Christodoulou, Kov{\'a}cs and Schapira
\cite{CKS08} study the case where outcomes are partitions over items and agents
bid on each item individually despite having combinatorial valuations over
items (item  bidding auctions). This research line was  followed-up
by a series of subsequent papers \cite{BR11,HKMN11,Feldman13,PLST12}. We refer
to \cite{dining} for a survey.}

The previously described related work focuses on equilibrium analysis for games that are likely to arise naturally in direct and reverse markets. 
A different line of research studies the \emph{mechanism design} problem, in which a (direct or reverse) auctioneer
engineers the
rules of the market in order to guarantee desired outcome such as efficiency
and frugality. For work in mechanism design more closely related to our setting,
we refer to \cite{frugal0,frugal1,frugal2,frugal3,frugal4}.}

{Another stream of related work consists of papers who extend the
traditional price competiton model due to Bertrand \cite{bertrand} to
combinatorial settings. Babaioff, Lucier and Nisan \cite{bertrand_networks}
model  the case each buyer can only access some of the sellers but not
others. Chawla and Roughgarden \cite{ChawlaR08} and Chawla and Niu
\cite{ChawlaN09} study Bertrand competition in the context of network routing. }

\section{Preliminaries}

We first define a basic version of the \emph{pricing game}, which is the
central object of study in this paper. Later we discuss and analyze extensions
of the basic model. A (basic) \emph{pricing game} is defined by a set $N = [n]$
of services (items), each controlled by a different provider (a seller) and one buyer
with valuation $v:2^{N} \rightarrow \R_+$, where $v(S)$ is the value for acquiring a subset $S\subseteq N$
of those services. We assume that $v(\emptyset) = 0$ and that the valuation is monotone
$v(S) \leq v(T)$ for $S \subseteq T$. The strategy of each seller is to set a
price $p_i$ for his service.

While facing a price vector $p$, the buyer chooses to purchase a set according
to a decision map $X : \R^n_+ \rightarrow 2^{N}$ that associated each price vector $p$
to a set that maximizes the quasi-linear utility of the buyer
$X(p) \in D(v;p) := \argmax_{S \subseteq N} v(S) - \textstyle\sum_{i \in S} p_i$. We refer to $D(v;p)$ as
the \emph{demand correspondence}. The decision map essentially fixes how ties
are broken between sets in the demand correspondence. We will say that a
decision map is \emph{maximal} if there is no $X' \in D(v;p)$ such that $X(p)
\subsetneq X'$. 

A valuation function $v$ together with a decision map define a game among the providers in which each provider strategy is to set a price $p_i$ for his service.
As in the basic model we assume that providers have no cost providing their services,\footnote{We discuss the extension to providers with costs in Section~\ref{sec:costs}.}
the utility of each provider $i$ equals to his revenue and is:
$$u_i^X(p) = p_i \cdot \one \{i \in X(p) \}$$
where $\one \{\cdot\}$ is the indicator function.

We are interested in studying the set of \emph{pure Nash equilibria} of this game:
$$\nash^X = \{p \in \R^n_+; u_i^X(p_i, p_{-i}) \geq u_i^X(p'_i, p_{-i}), \forall p'_i \in \R_+ \}$$
i.e., strategy profiles where no seller can increase his utility by deviating.
We say that a strategy profile is a (pure) $\epsilon$-\emph{Nash equilibrium} if
no seller can improve his utility by more then $\epsilon$ by deviating:
$$\nash_\epsilon^X = \{p \in \R^n_+; u_i^X(p_i, p_{-i}) \geq u_i^X(p'_i, p_{-i}) - \epsilon, \forall p'_i \in \R_+ \}$$
We also consider the sets of \emph{mixed Nash equilibria}:
$$\mnash^X = \{p \in (\Delta\R_+)^n; \E u_i^X(p_i, p_{-i}) \geq \E u_i^X(p'_i, p_{-i}), \forall p'_i \}$$
where the elements of $(\Delta\R_+)^n$ are vectors of $n$ independent random variable taking values in $\R_+$.

When the decision map $X$ is clear from the context, we omit it from the notation, for example, we refer to $u_i$, $\nash$, $\nash_\epsilon$, $\mnash$ instead of $u_i^X$, $\nash^X$, $\nash_\epsilon^X$ and $\mnash^X$.

\comment{We will also be interested in correlated Nash equilibria, in which the strategies of the sellers are correlated random variables:
$$\cnash = \left\{p \in \Delta(\R_+^n); \begin{aligned} & \E [u_i(p_i, p_{-i} ) \vert p_i] \geq \E [ u_i(p'_i, p_{-i}) \vert p_i] \\ &  \forall p_i, p'_i \in \R_+ \end{aligned} \right\}$$}

We will characterize the set of equilibria in terms of properties of the buyer's valuation $v$. Consider the following classes of valuation functions, from more general to more specific:
\begin{itemize}
\item \emph{combinatorial}: no assumptions on the valuation besides monotonicity.
\item \emph{subadditive}: $v(S\cup T) \leq v(S) + v(T), \forall S, T$.
\item \emph{XOS}: $v(S) = \max_{t \in I} \sum_{i \in S} w_{it}$ for $w_{it} \in \R_+$.
\item \emph{submodular}: $v(S\cup T) + v(S \cap T) \leq v(S) + v(T), \forall S, T$.
\item \emph{gross substitutes}: given a price vector $p \in \R^n_+$, if $S \in D(v; p)$ then for any vector $p' \geq p$, there is $T \in D(v; p')$ such that $S \cap \{j; p_j = p'_j\} \subseteq T$.
\end{itemize}

It is known that every class in the above list is a strict subclass of the previous one. We refer to Lehmann, Lehmann and Nisan \cite{LehmannLehmannNisan} for a comprehensive discussion on such classes and on their relations.

We fix some additional notation that will be useful for the rest of the paper: given a valuation $v$ and sets $S$ and $T$ such that $S \cap T = \emptyset$,  the \emph{marginal values} of $T$ with respect to $S$ is defined to be $v(T \vert S) = v(S \cup T) - v(S)$. Given a price vector $p \in \R^n_+$, denote $p(S) = \sum_{j \in S} p_j$. When clear from the context, we sometimes omit braces in the representation of sets. For example, we represent $v(\{i\} \vert S)$ by  $v(i \vert S)$, $A \cup \{i\}$ by $A \cup i$ and $S \setminus \{j\}$ by $S \setminus j$.

We keep our model as simple as possible to highlight its interesting features. Later in Section \ref{sec:extensions} we consider extensions of the basic model to incorporate service costs, multiple services provided by the same seller, a market with multiple sellers and multiple buyers and settings with incomplete information. We discuss how such additional features influence the results for the basic model.

\section{Examples, Existence and Characterization of
Equilibrium}\label{sec:existence}

First notice that the pricing game has, as special cases, the classical models
of Bertrand competition \cite{bertrand} and Nash bargaining
\cite{nash_bargaining}.
\begin{example}[Bertrand competition]
If $n > 1$, $v(\emptyset) = 0$ and $v(S) = c$ for $S \neq \emptyset$ for some
constant $c>0$, all services are perfect substitutes and the buyer has no utility
for purchasing more then one. In this case $X(p)$ will be either empty if $p_i >
c$ for all $i$ or will contain at most one service of positive price. It is known
that for this case there is a unique Nash equilibrium that corresponds to every
seller posting price $p_i = 0$.
\end{example}

\begin{example}[Nash bargaining]\label{ex:nash_bargaining}
The other extreme case is when $v(N) = c>0$ and $v(S) = 0$ for any $S
\neq N$. This models the scenario where all the services are necessary
components for the buyer. Let $X(p) = N$ if $\sum_i p_i \leq c$ and $X(p) = \emptyset$
otherwise. The set of pure Nash equilibria of this game correspond to:
$$\nash = \{p \in \R^n_+; \textstyle\sum_i p_i = c\} \cup \{p \in \R^n_+; \textstyle\sum_i p_i \geq c + \max_i p_i\}$$
In order to see that the profiles above are Nash equilibria, notice that if
$\textstyle\sum_i p_i = c$, then $X(p) = N$ and no seller has incentive to
change his price, since by decreasing his price, he can only decrease his
utility. By increasing his price, he can only make $X(p) = \emptyset$. If
$\textstyle\sum_i p_i \geq c + \max_i p_i$, then clearly $X(p) = \emptyset$ and
for any seller $i$ and price $p'_i > 0$, $X(p'_i, p_{-i})$ is still $\emptyset$,
since $\sum_{j \neq i} p_i + p'_i > \sum_{j} p_i - \max_i p_i \geq c$.

For the converse, notice that if $p \in \nash$, then either (i) $X(p) = N$ in
which case $\sum_i p_i \leq c$. If $\sum_i p_i < c$, any player can deviate to
$p'_i = c - \sum_{j \neq i} p_j > p_i$ and improve his utility. So, it must be
the case that $\sum_i p_i = c$ or: (ii) $X(p) = \emptyset$, in which case it
must be the case that no seller can decrease his price such that $p_i > 0$ and
$\sum_i p_i \leq c$, so it must be the case that $\sum_i p_i \geq c + \max_i
p_i$.
\end{example}

Now we extend the characterization of Example \ref{ex:nash_bargaining} to a
generic combinatorial valuation.

\begin{lemma}\label{lemma:characterization}
Given a price vector $p \in \R^n_+$ and $S = X(p) \in D(v;p)$, then
$p$ is a Nash equilibrium if the following two properties hold:
$$\forall i \in S, \exists T \not\ni i \text{ s.t. } v(S) - p(S) = v(T) - p(T)$$
$$\forall i \notin S, \forall T \ni i, v(S) - p(S) \geq v(T) - p(T \setminus
i)$$
Moreover, if the map $X$ is maximal, the above statement holds with \emph{'if and only if'}.
\end{lemma}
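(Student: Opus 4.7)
The plan is to prove each direction by analyzing unilateral price deviations and tracking how the buyer's demand correspondence shifts.

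For the ``if'' direction, I would assume (1) and (2) and split on whether seller $i$ is in $S$. For $i \in S$ with current utility $p_i$, any deviation to $p'_i \le p_i$ trivially earns at most $p'_i$. For $p'_i > p_i$, the witness $T \not\ni i$ guaranteed by (1) retains buyer utility $v(T)-p(T) = v(S)-p(S)$, while every set $S' \ni i$ has utility at most $(v(S') - p(S')) - (p'_i - p_i) \le (v(S) - p(S)) - (p'_i - p_i) < v(S)-p(S)$; hence the new demand excludes every set containing $i$ and seller $i$'s revenue collapses to $0$. For $i \notin S$, any deviation to $p'_i > 0$ makes every set $S' \ni i$ have utility at most $v(S) - p(S) - p'_i < v(S) - p(S)$ by (2), so all such sets are strictly dominated by $S$ itself, and the perturbed demand again excludes $i$.

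For the ``only if'' direction, I would assume $p \in \nash$ with $X$ maximal and derive each condition by contradiction via a small deviation. If (1) were to fail for some $i \in S$, then $v(T) - p(T) < v(S) - p(S)$ strictly for every $T \not\ni i$, so $D(v; p) \subseteq \{T : i \in T\}$. Nudging $p_i$ upward by an $\epsilon > 0$ smaller than the minimum of the finitely many strict gaps keeps the perturbed demand inside $D(v; p)$, forcing $X$ to still pick a set containing $i$ and earning seller $i$ revenue $p_i + \epsilon > p_i$. If (2) were to fail for some $i \notin S$ and $T \ni i$ with $\Delta := v(T) - p(T \setminus i) - (v(S) - p(S)) > 0$, then choosing $p'_i = \Delta/2$ makes $T$'s utility $v(S) - p(S) + \Delta/2$, strictly above that of any set not containing $i$; the perturbed demand therefore lies inside $\{S' : i \in S'\}$, forcing $X$ to pick such a set and giving seller $i$ revenue $\Delta/2 > 0$. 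Either case contradicts $p \in \nash$.

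The main technical obstacle is the strict-inequality bookkeeping in the $\epsilon$-nudge used for (1), where the same $\epsilon$ must simultaneously be small enough to preserve strict dominance of the $D(v; p)$-sets over every $T \not\ni i$ and yet be strictly positive; this works because there are only finitely many subsets of $N$ and each gap is strictly positive. The maximality of $X$ is invoked in the ``only if'' direction to block pathological tie-breaking by the decision map: without it, $X$ could ignore a demanded set containing $i$ in situations where seller $i$'s slight undercut would otherwise guarantee a profitable deviation, so (1) and (2) alone would no longer capture the equilibrium conditions.
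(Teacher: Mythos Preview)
Your argument is correct and follows the same route as the paper: both directions are handled by analyzing unilateral price perturbations and tracking which sets remain in the demand correspondence. The only stylistic difference is in deriving condition (1) from equilibrium: the paper works directly, taking $\epsilon_t = 1/t$, observing that $u_i(p_i+\epsilon_t,p_{-i})=0$ forces some $T_{\epsilon_t}\not\ni i$ into the demand, and then using pigeonhole over the finitely many subsets plus a limit to extract a single witness $T$; you argue the contrapositive via the minimum strict gap. These are equivalent finite-set arguments. One small remark: your closing paragraph asserts that maximality is \emph{invoked} in the ``only if'' direction, but your own argument---like the paper's visible proof---never actually uses it once $S=X(p)$ is taken as a hypothesis, and the ``slight undercut'' explanation you give does not match either condition being derived.
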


{The first condition ensures that a provider that is picked cannot gain by slightly increasing his price.
The second condition ensure that a provider that is not picked cannot gain by being picked even if he posts a positive price that is arbitrarily low.}

\begin{proof}
If a price vector $p$ is a Nash equilibrium, then for any $i \in S$ and
$\epsilon > 0$, $u_i(p_i + \epsilon, p_{-i}) \leq u_i(p_i, p_{-i})$, so it must
be the case that $u_i(p_i + \epsilon, p_{-i}) = 0$
as with $p_i + \epsilon$ seller $i$ will not sell.
So, for every $\epsilon$,
there is a set $T_\epsilon$ such that $i \notin T_\epsilon$ and $v(S) - p(S) -
\epsilon \leq v(T_\epsilon) - p(T_\epsilon)$. Taking $\epsilon_t = \frac{1}{t}$
for $t \in \Z_+$, since there are finitely many values for $T_{\epsilon_t}$,
there must be one that occurs infinitely often. Let this be $T$. Then taking the
limit as $t \rightarrow \infty$ in this subsequence one gets that: $v(S) - p(S)
\leq v(T) - p(T)$. Since $S \in D(v;p)$, then $v(S) - p(S)\geq v(T) - p(T)$ thus it must be the case that $v(S) -
p(S) = v(T) - p(T)$.

Given $i \notin S$, then it must be the case that $0 = u_i(p_i, p_{-i}) \geq
u_i(\epsilon, p_{-i})$ for every $\epsilon\geq 0$, so for any set $T \ni i$, it must be the case that $v(S)
- p(S) \geq v(T) - p(T \setminus i) + \epsilon$. Taking the limit as $\epsilon
\rightarrow 0$, we get the desired condition.

For the converse direction, \comment{in the case of a \emph{maximal} $X$ map,
assume that $p$ satisfy the properties above. Then the second
property guarantees that  $X(p) \subseteq S$, since for all sets containing an
element outside $S$ are dominated by $S$. Since the map $S \in D(v;p)$ and $X$
is maximal, then $X(p) = S$. To see that this is a Nash equilibrium,} observe
that for $i \in S$, there is no incentive to decrease his price. Also, the first
property guarantees that for any $p'_i > p_i$, $u_i(p'_i, p_i) = 0$, since there
is a set $T \ni i$ such that $v(S) - p(S) = v(T) - p(T)$. For $i \notin S$, the
second property guarantees that for any $p'_i > 0$, $X(p'_i, p_{-i}) = S$,  so
$u_i(p'_i, p_{-i}) = 0$.
\end{proof}

{One can define the \emph{welfare} $W(p)$ of a decision map $X$ at a
price vector $p$ as the sum of the utilities of the agents
(all sellers plus the buyer) when the buyer buys $X(p)$, i.e., $W(p) = v(X(p))$.
We show that for any valuation function $v$, when the decision map is maximal the set of Nash equilibrium in non-empty,
moreover, there is always a welfare maximizing  equilibrium, one in which the buyer buys all services.
The next theorem can be seen as showing that the \emph{Price of Stability} is one, i.e., it
states that there is always an equilibrium that produces maximal welfare.}

\begin{theorem}\label{cor:non_empty}
If $X$ is maximal, then the set of pure Nash equilibria with $X(p) = N$ is non-empty.
\end{theorem}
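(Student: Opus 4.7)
The plan is to invoke Lemma \ref{lemma:characterization}: it suffices to exhibit a price vector $p^* \geq 0$ with $N \in D(v; p^*)$ and, for every $i \in N$, some $T_i \not\ni i$ satisfying $v(N) - p^*(N) = v(T_i) - p^*(T_i)$. Once $N \in D(v;p^*)$, maximality of $X$ forces $X(p^*) = N$: any $X(p^*) \neq N$ in $D(v;p^*)$ would be a strict subset of $N$, violating maximality. Thus the condition ``$X(p) = N$'' in the theorem comes for free, and only the two demand/tightness conditions need to be secured.

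The first step is to rewrite the demand condition $v(N) - p(N) \geq v(T) - p(T)$ for all $T \subseteq N$, via the substitution $S = N \setminus T$, as the family of linear inequalities
\[
p(S) \leq v(N) - v(N \setminus S), \qquad \forall\, S \subseteq N.
\]
Monotonicity of $v$ makes every right-hand side non-negative, so $p = 0$ is feasible and the feasible polytope $P$ is non-empty; the constraint indexed by $S = N$ gives $\sum_i p_i \leq v(N)$, so $P$ is bounded. Hence the linear program $\max\{\sum_i p_i : p \in P,\ p \geq 0\}$ attains a finite optimum at some $p^*$, and by construction $p^* \in P$ already ensures $N \in D(v; p^*)$.

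The second step is to verify the tightness condition. Fix $i \in N$ and suppose for contradiction that no inequality $p(S) \leq v(N) - v(N \setminus S)$ with $i \in S$ is tight at $p^*$. The inequalities with $i \notin S$ do not involve $p_i$, so one could then raise $p_i^*$ by a sufficiently small $\epsilon > 0$ while remaining in $P$ and in the non-negative orthant, strictly increasing $\sum_i p_i$ and contradicting optimality. Hence some $S \ni i$ is tight at $p^*$, and setting $T_i := N \setminus S$ yields $v(N) - p^*(N) = v(T_i) - p^*(T_i)$ with $i \notin T_i$, exactly what Lemma \ref{lemma:characterization} demands.

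I expect the only real conceptual move to be the reformulation of the demand region as a polytope on which the sum of prices is bounded; after that, the tightness condition falls out of a routine LP-extremality argument. The only point meriting care is that the non-negativity constraint $p_i \geq 0$ involves $p_i$ alone and therefore cannot block an attempt to \emph{increase} $p_i$; it is only relevant if one tried to decrease $p_i$, which the argument never does.
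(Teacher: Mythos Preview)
Your proof is correct and follows essentially the same route as the paper: both define the polytope of non-negative prices for which $N$ is demanded, observe it is non-empty and compact, pick a maximal point, and verify the tightness condition of Lemma~\ref{lemma:characterization} for each $i$ by arguing that otherwise $p_i$ could be raised while staying feasible. The only cosmetic difference is that you maximize $\sum_i p_i$ (yielding one equilibrium), whereas the paper works with the full Pareto frontier of the polytope and in fact shows it coincides with the set of \emph{all} equilibria having $X(p)=N$; for the bare existence claim in the theorem your LP argument suffices and is marginally simpler.
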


\begin{proof}
Define the set $F$ as:
$$F = \{p \in \R^n_+; p(T) \leq v(T \vert N \setminus T), \forall T \subseteq N\}$$
and notice the condition defining $F$ corresponds to $v(N) - p(N) \geq v(N
\setminus T) - p(N \setminus T)$ for all $T$, so $N \in D(v;p)$. Note that $0\in F$ and thus $F\neq \emptyset$.

Now, define the set of Pareto vectors in $F$ as:
$$\pareto(F) = \{p \in F; \not\exists p' \in F \text{ s.t. } p' \geq p \text{
and } \textstyle\sum_i p'_i > \textstyle\sum_i p_i\}$$
Since $F$ is non-empty and compact, $\pareto(F)$ is also non-empty. Now, we
argue that $\pareto(F)$ is exactly the set of Nash equilibria with $X(p) = N$.
Since the full set of sellers is selected, it is enough to argue about the
first
condition in Lemma \ref{lemma:characterization}.

If $p \in \pareto(F)$ then for any $t \in \Z_+$, $X(p_i + \frac{1}{t}, p_{-i})
\neq N$, so there must be some $T_t \subsetneq N$ and $i \notin T_t$ such that
$v(T_t) - p(T_t) \geq v(N) - p(N)-\frac{1}{t}$. Since there are finitely many values for
$T_t$, some set $T$ must occur infinitely often. Taking the limit $t \rightarrow
\infty$ for this subsequence we get $v(T) - p(T) \geq v(N) - p(N)$, since
$N \in D(v;p)$, this must hold with equality. Therefore, by Lemma
\ref{lemma:characterization}, this is a Nash equilibrium.

Conversely, if $p \notin \pareto(F)$, we want to show that $p$ is not a Nash
equilibrium with $X(p) = N$. If $p \notin F$, then for some $T \subset N$ such
that $v(N) - p(N) < v(N \setminus T) - p(N \setminus T)$, so we can't have $X(p)
= N$. If $p \in F \setminus \pareto(F)$, then there is $p' \in F$ with $p' \geq
p$ and $i$ such that $p'_i >  p_i$. In particular, for all $T \ni i$, $p(T) <
p'_i + p(T\setminus i) \leq v(T \vert N \setminus T)$. That is, for all
such $T \ni i$, $v(N) - p(N) > v(N \setminus T) - p(N \setminus T)$, which
contradicts the first condition on Lemma \ref{lemma:characterization} and
therefore can't be an equilibrium.
\end{proof}

If $X(p)$ is not maximal, then Nash equilibria are not guaranteed to exist.
Consider for example the pricing game with one seller and one buyer with
valuation $v(\{1\}) = 1$ and a decision map such that $X(p_1) = \{1\}$ for $p_1 < 1$ and $X(p_1) = \emptyset$ otherwise.
Note that $X(p)$ is not maximal for $p_1=1$ as $\{1\}$ is also in the demand correspondence yet $X(1)=\emptyset$.
This game has no Nash equilibria, since for $p_1 < 1$
the deviation $p'_1 = \frac{1}{2}(1+p_1)$ is a strict best-response. For $p_1
\geq 1$, the deviation $p'_1 = 0.9$ is a strict best-response. This is the same
phenomenon that happens in first price auctions. Yet, similarly to first price
auctions, it is possible to show that given any decision map (which corresponds
to tie breaking rules in first price auctions), there exists an $\epsilon$-Nash
equilibrium for every $\epsilon > 0$.
{To prove that we first show that if $p$ is an equilibrium with respect
to some maximal tie breaking rule $X$,
then for any other tie breaking rule $X'$ there always exists $p^\epsilon$ that
is an $\epsilon$-Nash equilibria with respect to $X'$ and results with the same
welfare, $W(X'(p^\epsilon)) = W(X(p))$.}

\comment{
The following result allows us to export results on equilibria for maximal
decision maps to approximate equilibria of generic decision maps. Note that when
we say \emph{generic}, we still require that $X(p) \in \argmax v(S) - p(S)$, but
we don't require that $X(p)$ is maximal among the sets in $\argmax v(S) - p(S)$
anymore.
}

\begin{theorem}\label{lemma:any_map}
Fix a combinatorial valuation $v$. Let $X$ be a maximal decision map and $p \in
\nash^{X}$, i.e, $p$ is a pure Nash equilibrium with respect to the game defined by
$X$. Now, for any (not necessarily maximal) decision map $X'$, there are
$\epsilon$-Nash equilibria with respect to $X'$ converging to $p$, i.e.,
$\forall \epsilon > 0, \exists p^\epsilon \in \nash_\epsilon^{X'}$ such that
$W(X'(p^\epsilon)) = W(X(p))$ and $p^\epsilon \rightarrow p$.
\end{theorem}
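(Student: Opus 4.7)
The plan is to realize $p^\epsilon$ as a small perturbation of $p$ that forces any decision map $X'$ to select a welfare-equivalent set. Let $S = X(p)$, partition $S = S_+ \sqcup S_0$ according to whether $p_i > 0$, and for small $\delta > 0$ define $p^\delta_i = p_i - \delta$ for $i \in S_+$, $p^\delta_i = 0$ for $i \in S_0$, and $p^\delta_i = p_i + \delta$ for $i \notin S$. A direct computation gives
\[
v(T) - p^\delta(T) = (v(T) - p(T)) + \delta\bigl(|T \cap S_+| - |T \setminus S|\bigr),
\]
whose two summands are maximized (at $0$ and at $\delta|S_+|$) exactly on $D(v;p)$ and on $\{T : S_+ \subseteq T \subseteq S\}$ respectively. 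For $\delta$ smaller than both the minimum positive price in $S_+$ and the smallest demand-gap outside $D(v;p)$ divided by $n$, the joint maximizers are exactly those $T \in D(v;p)$ with $S_+ \subseteq T \subseteq S$. For any such $T$, the identity $v(T) - p(T) = v(S) - p(S)$ combined with $p(S \setminus T) = 0$ (since $S \setminus T \subseteq S_0$) gives $v(T) = v(S)$. Hence $X'(p^\delta) \in D(v;p^\delta)$ has value $v(S)$, proving $W(X'(p^\delta)) = W(X(p))$ for every decision map $X'$.

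The main step is bounding the gain from unilateral deviation by $O(\delta n)$, using the witnesses from Lemma~\ref{lemma:characterization}. For $i \in S$, let $T^* \not\ni i$ satisfy $v(T^*) - p(T^*) = v(S) - p(S)$. Expanding $v(T) - \text{price}(T)$ at $(p'_i, p^\delta_{-i})$: for $i \in S_+$, whenever $p'_i > p_i + \delta(n-1)$ the set $T^*$ strictly beats every $T \ni i$, so $i$ is not selected; this caps the deviation utility by $p_i + \delta(n-1)$, so the gain over $u_i^{X'}(p^\delta) = p_i - \delta$ is at most $\delta n$. For $i \in S_0$ the analogous argument shows that $i$ can only be selected when $p'_i \leq \delta n$, again capping the gain by $\delta n$. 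For $i \notin S$, the second condition of Lemma~\ref{lemma:characterization}, combined with the $+\delta$ offset on $N \setminus S$, forces $v(T) - \text{price}(T) < v(S) - p^\delta(S)$ for every $T \ni i$ and every $p'_i > 0$, so $i$ is never selected and the gain is zero.

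Taking $\delta = \epsilon/n$ (and sufficiently small that the demand analysis above holds) thus yields $p^\epsilon := p^\delta \in \nash_\epsilon^{X'}$ with $W(X'(p^\epsilon)) = W(X(p))$ and $p^\epsilon \to p$ as $\delta \to 0$. The main subtlety is handling sellers in $S_0$: their prices cannot be decreased further, so the price increase on $N \setminus S$ is precisely what prevents $X'$ from picking a welfare-inferior set containing elements outside $S$ that would otherwise tie with $S$ at $p^\delta$.
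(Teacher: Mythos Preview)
Your argument is correct and follows the same route as the paper: perturb $p$ slightly so that any decision map $X'$ must pick a set sandwiched between $S_+$ and $S$, then bound each seller's deviation gain via the witnesses supplied by Lemma~\ref{lemma:characterization}. The paper's construction differs only in that it leaves prices outside $S$ unchanged, taking $p^\epsilon_i = [p_i - \epsilon/n]^+$ for $i \in S$ and $p^\epsilon_i = p_i$ for $i \notin S$. Your closing remark that the upward shift on $N \setminus S$ is ``precisely what'' handles sellers in $S_0$ is misleading: maximality of $X$ already forces $p_j > 0$ for every $j \notin S$ (otherwise $S \cup \{j\} \in D(v;p)$ would violate maximality), and this strict positivity together with the second condition of Lemma~\ref{lemma:characterization} is what excludes elements outside $S$ --- so the paper's simpler perturbation achieves the same conclusion without raising any prices and without your demand-gap threshold on $\delta$.
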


The main idea of the proof is to construct for every $p \in \nash^{X}$ and
$\epsilon > 0$ a price vector $p^\epsilon$ such that $p^\epsilon_i = [p_i -
\frac{\epsilon}{n}]^+$ for $i \in X(p)$ and $p^\epsilon_i = p_i$ otherwise.
Then we argue that $p^\epsilon$ must be an $\epsilon$-Nash equilibrium of the
game induced by any decision map $X'$. We defer the details of the proof
\columnsversion{for the full version of this paper}{to the appendix}.
{We note that the combination of Theorem~\ref{cor:non_empty} and
Theorem~\ref{lemma:any_map} implies that:}

\begin{corollary}
For  any combinatorial valuation $v$, any decision map $X'$ and any $\epsilon >
0$, the set of welfare maximizing $\epsilon$-Nash equilibria is non-empty.
\end{corollary}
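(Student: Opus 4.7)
The plan is to observe that the corollary follows by chaining the two preceding theorems, with monotonicity of $v$ doing the last bit of work. First, I would fix any maximal decision map $X$ for $v$; such a map exists since for each $p$ the demand correspondence $D(v;p)$ is a finite nonempty collection of subsets of $N$, and one can always pick an inclusion-maximal element (for instance by refining an arbitrary tie-breaking rule to favor supersets). This makes Theorem~\ref{cor:non_empty} applicable.

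Next I would apply Theorem~\ref{cor:non_empty} to $X$ to obtain a pure Nash equilibrium $p \in \nash^{X}$ with $X(p) = N$. Since $v$ is monotone, $v(N) = \max_{S \subseteq N} v(S)$, so this equilibrium is welfare maximizing: $W(p) = v(X(p)) = v(N)$ is the highest welfare that any outcome can produce.

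Now, given the arbitrary decision map $X'$ and any $\epsilon > 0$, I would invoke Theorem~\ref{lemma:any_map} with the same $v$, the maximal map $X$, the equilibrium $p$, and the target map $X'$. This directly yields a price vector $p^{\epsilon} \in \nash_{\epsilon}^{X'}$ satisfying $W(X'(p^{\epsilon})) = W(X(p)) = v(N)$. Thus $p^{\epsilon}$ is an $\epsilon$-Nash equilibrium under $X'$ whose welfare equals the maximum possible welfare $v(N)$, so it lies in the set of welfare-maximizing $\epsilon$-Nash equilibria, proving that this set is non-empty.

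The argument is essentially a routine composition, so there is no substantial obstacle; the only subtle point is observing that ``welfare maximizing'' in this model coincides with ``the buyer buys $N$,'' which uses monotonicity of $v$ — without monotonicity one would need to re-run the construction of Theorem~\ref{cor:non_empty} on the welfare-maximizing bundle rather than on $N$, but under our standing assumption this is immediate.
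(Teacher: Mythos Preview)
Your proposal is correct and matches the paper's approach exactly: the paper explicitly states that the corollary follows from the combination of Theorem~\ref{cor:non_empty} and Theorem~\ref{lemma:any_map}, and your write-up fills in precisely those two steps (plus the routine observation that a maximal decision map exists and that monotonicity makes $v(N)$ the optimal welfare).
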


\section{Uniqueness of Equilibria}\label{sec:uniqueness}

In the previous section, we showed that if the decision map is maximal, equilibria
are guaranteed to exist for any combinatorial valuation. This equilibrium might
not be unique, as the Nash Bargaining Example (Example \ref{ex:nash_bargaining})
shows. In this section we show that submodularity of the valuation function is a
sufficient condition to guarantee uniqueness of utilities for pure Nash equilibrium.
Moreover, equilibrium prices are unique for every seller with non-zero utility.
For the subclass of gross-substitute valuation, the same claim also hold for
mixed Nash equilibrium.

\subsection{\columnsversion{Submodular valuations: uniqueness of \\ pure
equilibria}{Submodular valuations: uniqueness of pure
equilibria}}

We start with a few observations:

\begin{lemma}\label{lemma:pre_uniqueness}
If the buyer valuation is submodular, then for
any pure Nash equilibrium $p$, $p_i \geq v(i \vert N \setminus i)$.
Moreover, if $S = X(p)$ and $i \notin S$, then $v(i \vert S) = 0$.
\end{lemma}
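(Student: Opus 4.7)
The plan is to invoke Lemma \ref{lemma:characterization} to extract the two necessary equilibrium conditions (this direction of the lemma holds without assuming maximality of $X$, as inspection of its proof shows), and then combine them with submodularity.

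The "moreover" part is the easier of the two, so I would dispose of it first. Fix $i \notin S = X(p)$ and apply the second condition of Lemma \ref{lemma:characterization} with the set $T = S \cup i$: this yields $v(S) - p(S) \geq v(S \cup i) - p(S)$, so $v(S \cup i) \leq v(S)$. Combined with monotonicity of $v$, this forces $v(S \cup i) = v(S)$, i.e.\ $v(i \mid S) = 0$.

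For the price lower bound $p_i \geq v(i \mid N \setminus i)$, the overall strategy in both cases $i \in S$ and $i \notin S$ is the same: first show $p_i \geq v(i \mid A)$ for some $A \subseteq N \setminus i$, then invoke submodularity (which gives $v(i \mid A) \geq v(i \mid N \setminus i)$ whenever $A \subseteq N \setminus i$) to conclude. When $i \notin S$, the simpler case, the fact $S \in D(v;p)$ compared against the deviation $S \cup i$ immediately gives $p_i \geq v(S \cup i) - v(S) = v(i \mid S)$, and submodularity with $A = S$ finishes the argument.

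The main obstacle is the case $i \in S$, since $p_i$ cannot generally be pinned down by comparing $S$ with $S \setminus i$ alone (the price of $i$ in equilibrium can be depressed by substitute providers). The trick I would use is to exploit the witness set $T$ from the first condition of Lemma \ref{lemma:characterization}, which satisfies $i \notin T$ and $v(S) - p(S) = v(T) - p(T)$, equivalently $p(T) - p(S) = v(T) - v(S)$. Then applying $S \in D(v;p)$ against the set $T \cup i$ yields
\[
v(S) - p(S) \;\geq\; v(T \cup i) - p(T) - p_i,
\]
and substituting the equality gives
\[
p_i \;\geq\; v(T \cup i) - v(T) \;=\; v(i \mid T).
\]
Since $T \subseteq N \setminus i$, submodularity gives $v(i \mid T) \geq v(i \mid N \setminus i)$, closing the case. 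The two cases together establish the lemma.
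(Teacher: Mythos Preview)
Your proof is correct, but it takes a different route from the paper's. The paper argues both parts by direct deviation: for the price bound, it observes that if $p_i < v(i\mid N\setminus i)$ then by submodularity $v(i\mid T) \geq v(i\mid N\setminus i) > p_i$ for \emph{every} $T \not\ni i$, so adding $i$ to any set strictly helps the buyer; hence $i$ is always selected and could profitably raise his price. For the ``moreover'' part it likewise argues that if $v(i\mid S)>0$ then seller $i$ could post $p'_i=\tfrac{1}{2}v(i\mid S)$ and be selected (since $S\cup i$ then beats every $T\not\ni i$). Your argument instead pulls the necessary conditions from Lemma~\ref{lemma:characterization} and, for $i\in S$, uses the witness set $T$ to compare $S$ with $T\cup i$ inside $D(v;p)$, extracting $p_i \geq v(i\mid T)$ before applying submodularity. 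The paper's approach is more self-contained (no appeal to Lemma~\ref{lemma:characterization}, no case split on $i\in S$ versus $i\notin S$); yours leans on the equilibrium characterization and makes the role of the tying set $T$ explicit, which dovetails nicely with how $T$ is used again in the subsequent uniqueness theorem. Your remark that the ``Nash $\Rightarrow$ conditions'' direction of Lemma~\ref{lemma:characterization} does not require maximality is correct and worth noting, since the lemma's statement only asserts the converse.
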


\begin{proof}
Setting price $p_i < v(i \vert N \setminus i)$ makes seller $i$ always be
chosen, since for any $S \subseteq N \setminus i$, $v(S \cup i) - p(S \cup i) >
v(S) - p(S) + [v(i \vert S) - v(i \vert N)] \geq v(S) - p(S)$, where the last
step follows by submodularity. Therefore, no price $p_i < v(i \vert N \setminus
i)$ can be in equilibrium, since a deviation to any price between $p_i$ and $v(i
\vert N \setminus i)$ is an improvement.

Also, if $i \notin S = X(p)$, then $v(i \vert S) = 0$, otherwise he could
deviate to $p'_i = \frac{1}{2} v(i \vert S)$, get selected by the buyer and
have positive utility.
\end{proof}

The following simple fact about submodular functions will be useful in the following proof:

\begin{lemma}\label{lemma:submodular_fact}
If $v$ is a submodular function and $S \cap T = \emptyset$, then $v(T \vert S)
= 0$ iff $v(t \vert S) = 0$ for all $t \in T$
\end{lemma}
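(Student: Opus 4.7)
The plan is to prove both directions using the telescoping identity for marginal values together with monotonicity and submodularity. Write $T = \{t_1, \ldots, t_k\}$ in any order and use the identity $v(T \mid S) = \sum_{i=1}^{k} v(t_i \mid S \cup \{t_1, \ldots, t_{i-1}\})$, which follows from a telescoping sum of $v(S \cup \{t_1, \ldots, t_i\}) - v(S \cup \{t_1, \ldots, t_{i-1}\})$. By monotonicity of $v$, each summand is non-negative.

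For the ``only if'' direction, assume $v(T \mid S) = 0$. Fix any $t \in T$ and order $T$ so that $t_1 = t$. Then $v(T\mid S) = v(t \mid S) + v(T \setminus t \mid S \cup t)$, and since both terms are non-negative (by monotonicity) and their sum is $0$, we get $v(t \mid S) = 0$. Since $t$ was arbitrary, this gives the forward implication.

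For the ``if'' direction, assume $v(t \mid S) = 0$ for every $t \in T$. Using the telescoping identity above, each summand satisfies
\[
0 \le v(t_i \mid S \cup \{t_1, \ldots, t_{i-1}\}) \le v(t_i \mid S) = 0,
\]
where the upper bound is exactly the submodularity inequality $v(t_i \mid A) \le v(t_i \mid B)$ for $B \subseteq A$ (applied with $B = S$ and $A = S \cup \{t_1, \ldots, t_{i-1}\}$). Thus every summand vanishes, so $v(T \mid S) = 0$.

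There is no serious obstacle here: the only technical point is ensuring that monotonicity (which gives non-negativity of marginals) and submodularity (which gives the diminishing returns inequality on a single element) are both invoked, and it is worth noting that the assumption $v(\emptyset) = 0$ together with monotonicity gives the non-negativity of marginal values. The argument is short enough that it can be presented in a few lines in the actual proof.
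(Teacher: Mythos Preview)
Your proof is correct and follows essentially the same approach as the paper: both directions use the telescoping identity for marginals together with monotonicity and submodularity. The only cosmetic difference is in the ``only if'' direction, where the paper observes directly that $v(t \mid S) \le v(T \mid S)$ by monotonicity, while you split $v(T \mid S) = v(t \mid S) + v(T \setminus t \mid S \cup t)$ into two non-negative pieces; these are equivalent one-line arguments.
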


\begin{proof}
If $v(T \vert S) = 0$, then by monotonicity of $v$, $v(t \vert S) \leq v(T
\vert S) = 0$. Now, if  $v(t \vert S) = 0$ for all $t \in T$ then for let $T =
\{t_1, \hdots, t_k\}$ and  $$v(T \vert S) = \sum_{i=1}^k v(t_i \vert S \cup
\{t_j; j < i\}) \leq \sum_{i=1}^k v(t_i \vert S ) = 0 $$
\end{proof}

The previous lemma together with Lemma \ref{lemma:pre_uniqueness} imply that if
$v$ is submodular, then for any Nash equilibrium $p$ and $S = X(p)$, it must
holds that $v(S) = v(N)$. Looking again from the perspective of the welfare function, $W(p) = v(X(p))$,
the theorem above gives a \emph{Price of Anarchy} result for pricing games with
submodular function: is states that all Nash equilibria maximize the welfare, and thus the Price of Anarchy is exactly one.

\begin{theorem}
If the valuation $v$ is submodular, then the
utility of each seller $i$ in any pure Nash equilibrium is $u_i = v(i \vert N \setminus i)$.
In particular, for every pure Nash equilibrium profile $p$, if $v(i \vert N \setminus i) >
0$, then $p_i = v(i \vert N \setminus i)$ and $i\in X(p)$.
\end{theorem}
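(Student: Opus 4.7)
The plan is to split into two cases depending on whether $i \in S := X(p)$, using that (as established in the paragraphs just before the theorem) $v(S) = v(N)$ and $v(j \vert S) = 0$ for every $j \notin S$.

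For $i \notin S$, one has $u_i = 0$ immediately, so it suffices to check that $v(i \vert N \setminus i) = 0$. Since $S \subseteq N \setminus i$ and $v(i \vert S) = 0$ by Lemma~\ref{lemma:pre_uniqueness}, submodularity gives $v(i \vert N \setminus i) \leq v(i \vert S) = 0$, and monotonicity closes it.

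For $i \in S$, the lower bound $p_i \geq v(i \vert N \setminus i)$ is Lemma~\ref{lemma:pre_uniqueness}; the matching upper bound is the crux. I first aim to show $S \setminus i \in D(v;p)$. By Lemma~\ref{lemma:characterization} there is some $T \not\ni i$ in $D(v;p)$; I claim $T \subseteq S$, since otherwise the second part of that lemma applied to $j \in T \setminus S$ with $U := T$ would force $p_j \leq 0$, contradicting the fact that maximality of $X$ forces $p_j > 0$ for every $j \notin S$ (otherwise $v(j \vert S) = 0$ would make $S \cup j$ a strict superset of $S$ in $D(v;p)$). With $T \subseteq S \setminus i$ in hand, I inductively adjoin each $k \in (S \setminus i) \setminus T$: submodularity together with $S \in D(v;p)$ gives $v(k \vert T) \geq v(k \vert S \setminus k) \geq p_k$, so $T \cup k$ stays in $D(v;p)$. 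Iterating yields $S \setminus i \in D(v;p)$, and therefore $p_i = v(i \vert S \setminus i)$.

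To close the potential gap between $v(i \vert S \setminus i)$ and $v(i \vert N \setminus i)$, I apply the second part of Lemma~\ref{lemma:characterization} to each $j \in N \setminus S$ with $U := (S \setminus i) \cup j$: this gives $p_i + v(j \vert S \setminus i) \leq v(i \vert S \setminus i)$, and combined with $p_i = v(i \vert S \setminus i)$ it forces $v(j \vert S \setminus i) = 0$ for every $j \in N \setminus S$. Lemma~\ref{lemma:submodular_fact} upgrades this to $v(N \setminus S \vert S \setminus i) = 0$, which (using $v(S) = v(N)$) is precisely the identity $v(i \vert S \setminus i) = v(i \vert N \setminus i)$. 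Thus $p_i = v(i \vert N \setminus i)$, and the ``in particular'' claim follows since $v(i \vert N \setminus i) > 0$ forces $u_i > 0$ and hence $i \in X(p)$. The hardest step is the upper bound, which requires both the iterative extension of $T$ up to $S \setminus i$ and the coordination of the two conditions of Lemma~\ref{lemma:characterization}.
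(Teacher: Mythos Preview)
Your argument is sound but leans on an assumption not present in the theorem statement: you invoke maximality of $X$ to conclude that $p_j > 0$ for every $j \notin S$, and thereby force $T \subseteq S$. The paper does not assume maximality here and instead splits on whether $T \subset S$. In the case $T \not\subset S$ it derives (just as you do, via the second condition of Lemma~\ref{lemma:characterization}) that $p_j = 0$ for each $j \in T \setminus S$, but rather than reaching a contradiction it uses this directly: comparing $S$ with $S \cup T \setminus i$ in $D(v;p)$ yields $p_i \leq v(S) - v(S \cup T \setminus i)$, and a separate argument (showing $v(k \vert T) = 0$ for all $k \notin S \cup T$) gives $v(S \cup T \setminus i) = v(N \setminus i)$. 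Your shortcut is correct under maximality and collapses the case split cleanly, but it proves a narrower statement than the paper does.

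Within the case $T \subseteq S \setminus i$, your route also differs from the paper's. You inductively adjoin the elements of $(S \setminus i) \setminus T$ one by one, using submodularity in the form $v(k \vert T') \geq v(k \vert S \setminus k) \geq p_k$ to keep each intermediate set in $D(v;p)$, and conclude $S \setminus i \in D(v;p)$, hence $p_i = v(i \vert S \setminus i)$. The paper instead uses a one-line sandwich: $T \in D(v;p)$ gives $v(i \vert T) \leq p_i$, $S \in D(v;p)$ gives $p_i \leq v(i \vert S \setminus i)$, and submodularity gives $v(i \vert T) \geq v(i \vert S \setminus i)$, forcing equality throughout with no induction. Your closing step---applying the second condition of Lemma~\ref{lemma:characterization} with the set $(S \setminus i) \cup j$ to obtain $v(j \vert S \setminus i) = 0$ for each $j \notin S$, then invoking Lemma~\ref{lemma:submodular_fact} to get $v(S \setminus i) = v(N \setminus i)$---is exactly the paper's.
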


\begin{proof}
Let $p$ be a pure Nash equilibrium and $S = X(p)$. From Lemma
\ref{lemma:pre_uniqueness}, we know that $p_i \geq v(i \vert N \setminus i)$ for all $i$. Additionally for $i \notin S$, $v(i \vert S) = 0$
and therefore by sub-modularity $v(i \vert N \setminus i) = 0$. We are left to
prove that for $i \in S$, $p_i \leq v(i \vert N \setminus i)$. Fixing $i \in S$, we
use the first condition in Lemma \ref{lemma:characterization} to obtain a set $T
\not\ni i$ such that $v(S) - p(S) = v(T) - p(T)$. We consider two cases:

\emph{Case (i)} $T \not\subset S$. Observe that for all $j \in T \setminus S$,
$p_j = 0$, otherwise this seller could deviate to $\frac{1}{2} p_j$, making $T$
be selected by the buyer instead of $S$ and getting positive utility. Given that
since $S \in D(v;p)$:
$$v(S) - p(S) \geq v(S \cup T \setminus i) - p(S \cup T \setminus i) = v(S \cup
T \setminus i) - p(S \setminus i)$$
which implies that:
$$p_i \leq v(S) - v(S \cup T \setminus i) = v(N) - v(N \setminus i) = v(i \vert
N \setminus i)$$
where $v(S) = v(N)$ follows from Lemma \ref{lemma:pre_uniqueness}. The fact
that $v(S \cup T \setminus i) = v(N \setminus i)$ comes from the observation
that for $k \notin S \cup T$, $v(k \vert T) = 0$, otherwise seller $k$ could set
his price to $\frac{1}{2} v(k \vert T)$ and be selected, since $T \cup \{k\}$
would be preferable then $S$ at such prices. Since $v(k \vert T) = 0$, by
submodularity $v(k \vert S \cup T \setminus i) = 0$ and therefore  $v(S \cup T
\setminus i) = v(N \setminus i)$ by Lemma \ref{lemma:submodular_fact}.

\emph{Case (ii)} $T \subset S$. We have that $v(S) - p(S) = v(T) - p(T) \geq
\max ( v(S\setminus i) - p(S \setminus i), v(T \cup \{i\}) - p(T \cup \{ i
\}))$. In particular: $v(i \vert T) \leq p_i \leq v(i \vert S \setminus i)$.
Since by submodularity $v(i\vert T) \geq v(i \vert S \setminus i)$, the
inequalities must hold with equality, so $p_i = v(i \vert S \setminus i) = v(S)
- v(S\setminus i) = v(N) - v(N\setminus i) = v(i \vert N \setminus i)$. We know
$v(S) = v(N)$ from Lemma \ref{lemma:pre_uniqueness}. In order to see that
$v(S\setminus i) =  v(N\setminus i)$, notice that for any $j \notin S$ by the
second condition of Lemma \ref{lemma:characterization}:
$$v(S\setminus i) - p(S \setminus i) = v(S) - p(S) \geq v(S \cup j \setminus i)
- p(S \setminus i)$$
and therefore $v(j \vert S \setminus i) = 0$. Using Lemma
\ref{lemma:submodular_fact}, we get that $v(S \setminus i) = v(N \setminus i)$.
\end{proof}

We emphasize that even if $X(p)=S\neq N$, the utility of $i\in S$ is $v_i(i|N\setminus i)$ and not $v_i(i|S\setminus i)$. These two might be different, for example, this is the case for Bertrand competition.

The next example shows that submodularity is in some sense necessary. Weaker concepts of 'substitutability' such as XOS or subadditivity are not enough to ensure uniqueness of equilibria.

\begin{example}
Consider the following instance of the pricing game with three sellers and a
buyer with valuation $v$ such that $v(\emptyset) = 0$, $v(S) = 2$ for $1\leq
\abs{S} \leq 2$ and $v(S) = 3$ for $\abs{S} = 3$. This function is in XOS and is
subadditive, but it is not submodular, since $v(3 \vert \{1\}) = 0 < 1 = v(3
\vert \{1,2\})$.

The utilities in equilibrium are not unique, indeed, using the conditions in
Lemma \ref{lemma:characterization} it is easy to see that the pure equilibria
$p$ with $X(p) = N$ are given by:
$$p = (x,x,1-x) \text{ for } 0 \leq x \leq \textstyle\frac{1}{2}$$
and permutations thereof.
\end{example}

\subsection{Gross substitute valuations: uniqueness of mixed equilibria}

One other question that arises from the previous theorem is weather the
uniqueness in utilities also holds for mixed Nash equilibria. In what follows we
give a mixed Nash uniqueness result for the subclass of gross substitute
valuations. Recall that a valuation $v$ is gross substitute if
{when some prices are increased, there is always a demanded set which
contains all previously demanded items for which the price did not change.
Formally, $v$ is gross substitute if  }
for any price vectors $p' \geq p$ and $S \in D(v;p)$, there is $S' \in D(v;p')$ such that $S \cap \{j; p_j = p'_j\} \subseteq S'$. We say that a decision map $v$ is \emph{\textsc{Gs}-consistent} with a gross substitute valuation $v$ if given $p' \geq p$, $X(p) \cap \{j; p_j = p'_j\} \subseteq X(p')$.

It is known that when $v$ is a gross substitute valuation, then the
\emph{greedy algorithm} with lexicographic tie-breaking implements a
\textsc{Gs}-consistent decision map. This greedy algorithm starts with $S =
\emptyset$ and while $\max_{i \notin S} v(i \vert S) - p_i \geq 0$, adds to $S$
the lexicographically first element for which the maximum is reached. Let $X(p)$
be the resulting set from this process. It follows from
\cite{DressTerhalle_WellLayered, Murota96} that $X(p) \in \argmax_S v(S) -
p(S)$.

Before stating and proving the uniqueness result for gross substitutes, we give
a version of Lemma \ref{lemma:pre_uniqueness} for mixed Nash equilibria and a
technical lemma about gross substitute valuations.

\begin{lemma}\label{lemma:pre_uniqueness_mixed}
If the buyer valuation is submodular, then for
any mixed Nash equilibrium $p$, $p_i \geq v(i \vert N \setminus i)$ with
probability $1$. Moreover, if $\E[u_i(p)] = 0$, then $\Prob[v(i \vert X(p)) > 0]
= 0$.
\end{lemma}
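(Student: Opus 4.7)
The plan is to adapt the proof of the pure-equilibrium version (Lemma \ref{lemma:pre_uniqueness}) to the mixed setting, leveraging the standard fact that in any mixed Nash equilibrium every pure strategy in the support of $F_i$ must be a best response. Write $\alpha_i := v(i \mid N \setminus i)$ and $U_i := \E[u_i(p)]$ for the equilibrium expected utility.

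For the first claim, I would copy the deviation from the pure case: for any $\delta > 0$, if seller $i$ plays the fixed price $q = \alpha_i - \delta$, then by submodularity $v(i \mid S) \geq \alpha_i > q$ for every $S \subseteq N \setminus i$, so every set in $D(v; q, p_{-i})$ strictly prefers to include $i$. Hence $i \in X(q, p_{-i})$ deterministically for every $p_{-i}$, giving $\E[u_i(q, p_{-i})] = q$, and letting $\delta \to 0$ yields $U_i \geq \alpha_i$. Since $u_i(p_i, p_{-i}) = p_i \cdot \one\{i \in X(p)\} \leq p_i$ pointwise, the best-response condition forces $p_i \geq \E_{p_{-i}}[u_i(p_i, p_{-i})] = U_i \geq \alpha_i$ almost surely over $F_i$.

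For the second claim, assume $U_i = 0$. Then for every fixed $q > 0$, the no-deviation condition gives $\E_{p_{-i}}[q \cdot \one\{i \in X(q, p_{-i})\}] \leq 0$, so $i \notin X(q, p_{-i})$ almost surely over $F_{-i}$. Define the threshold
\[
\tau(p_{-i}) := \max_{S \subseteq N \setminus i}[v(S \cup i) - p(S)] - \max_{S \subseteq N \setminus i}[v(S) - p(S)] \geq 0.
\]
Whenever $p_i < \tau(p_{-i})$, every set in $D(v; p_i, p_{-i})$ must contain $i$, so applying the observation above along a countable sequence $q \to 0^+$ forces $\tau(p_{-i}) = 0$ almost surely. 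Conditional on $\tau(p_{-i}) = 0$, every $S^* \in D(v; p_i, p_{-i})$ satisfies $v(i \mid S^*) = 0$: this is trivial if $i \in S^*$, and otherwise $v(S^* \cup i) - p(S^*) \leq \max_{S \not\ni i}[v(S \cup i) - p(S)] = \max_{S \not\ni i}[v(S) - p(S)] = v(S^*) - p(S^*)$, yielding $v(i \mid S^*) \leq 0$. Applying this to $S^* = X(p)$ and using independence of $p_i$ and $p_{-i}$ gives $\Prob[v(i \mid X(p)) > 0] = 0$.

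The main obstacle is correctly handling the tiebreaking of the decision map $X$ when $p_i$ is close to the threshold $\tau(p_{-i})$; since $X$ is an arbitrary selection from the demand correspondence, one cannot argue directly about which specific set $X$ returns at the threshold. The remedy is to phrase everything through $\tau(p_{-i})$, which captures strict inclusion or exclusion of $i$ in the demand without any reliance on how ties are broken, and then use the countable-sequence limit to push the threshold down to zero on a probability-one event.
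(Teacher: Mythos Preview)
Your proof is correct and follows essentially the same approach as the paper: for the first claim both of you adapt the deviation from Lemma~\ref{lemma:pre_uniqueness} (a price below $v(i\mid N\setminus i)$ forces $i$ into every demanded set by submodularity), and for the second claim both of you deviate to a small positive price to contradict $U_i=0$. Your threshold function $\tau(p_{-i})$ is a more careful packaging of the paper's one-line contrapositive (``if $\Prob[v(i\mid X(p))>\epsilon]>0$, deviate to price $\epsilon$''), since it depends only on $p_{-i}$ and so cleanly sidesteps any worry about how $X$ breaks ties; but the underlying idea is identical.
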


\begin{proof}
The first part follows directly from the proof of Lemma
\ref{lemma:pre_uniqueness}. For the second part, if $\Prob[v(i \vert X(p)) > 0]
> 0$, then there is $\epsilon > 0$ for which $\Prob[v(i \vert X(p)) > \epsilon]
> 0$. Therefore, by deviating to price $p_i = \epsilon$, seller $i$ can be obtain
utility $\epsilon$ with positive probability, contradicting that $\E[u_i(p)] =
0$ for this equilibrium.
\end{proof}

\begin{lemma}\label{lemma:technical_gs}
Consider a gross substitute valuation function $v$ over $N$, disjoint sets $A,B$ and $j \notin A \cup B$. Assume that $v(i \vert A \cup j) = 0$  for all $i \in B$ and that $0 \leq v(j \vert A) - x < v(B \vert A)$ for some $x \geq 0$. Then there is $i \in B$ such that $v(j \vert A) - x < v(i \vert A)$.
\end{lemma}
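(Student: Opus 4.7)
The plan is to invoke the well-known equivalence (due to Murota \cite{Murota96}, a reference the paper already cites when discussing GS) between gross substitutability and the M$^\natural$-concave single-exchange property: for any $S_1, S_2 \subseteq N$ and any $a \in S_1 \setminus S_2$, there exists either $b = \bot$ or $b \in S_2 \setminus S_1$ such that
\[
v(S_1) + v(S_2) \leq v(S_1 \setminus \{a\} \cup \{b\}) + v(S_2 \setminus \{b\} \cup \{a\}),
\]
where $b = \bot$ is shorthand for no element being moved back, so the right-hand side is $v(S_1 \setminus \{a\}) + v(S_2 \cup \{a\})$.

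I would apply this axiom with $S_1 = A \cup \{j\}$, $S_2 = A \cup B$ and $a = j$, so the produced $b$ lies in $B \cup \{\bot\}$. Before case-splitting, I would simplify the right-hand side using the zero-marginal hypothesis: since GS implies submodularity, applying Lemma \ref{lemma:submodular_fact} to the contracted valuation $T \mapsto v(T \cup A \cup j) - v(A \cup j)$ gives $v(B' \vert A \cup j) = 0$ for every $B' \subseteq B$, hence $v(A \cup j \cup B') = v(A \cup j)$ for every such $B'$.

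Now split on $b$. If $b = \bot$, the exchange inequality reads $v(A \cup j) + v(A \cup B) \leq v(A) + v(A \cup B \cup j) = v(A) + v(A \cup j)$, giving $v(B \vert A) \leq 0$; this contradicts the hypothesis $v(B \vert A) > v(j \vert A) - x \geq 0$, so $b = \bot$ is impossible. Hence $b \in B$, and the right-hand side becomes $v(A \cup b) + v(A \cup j \cup (B \setminus b)) = v(A \cup b) + v(A \cup j)$; rearranging gives $v(b \vert A) \geq v(B \vert A) > v(j \vert A) - x$, so $i := b$ is the desired element.

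The main obstacle will be invoking the M$^\natural$-concavity characterization cleanly, because the paper defines gross substitutability only through the demand correspondence. An alternative would be to replay the proof by setting up explicit price vectors that place both $A \cup \{j\}$ and $A \cup B$ in the demand correspondence and then increasing $p_j$ or the $p_b$'s to invoke the demand-based definition directly; however, this route requires a careful subcase analysis on whether $v(B \vert A) \gtreqless v(j \vert A)$, whereas the exchange-axiom approach handles all parameter regimes uniformly.
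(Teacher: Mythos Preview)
Your argument is correct and noticeably cleaner than the paper's. The paper works directly from the demand-correspondence definition of gross substitutes: it builds a price vector with $p_j = x$, price $0$ on $A$, a small $\epsilon$ on $B$, and $\infty$ elsewhere, argues that any maximal demanded set must contain some $i \in B$ (and exclude $j$), and then raises the prices of everything outside $A \cup \{i,j\}$ to $\infty$ so that the GS property forces $i$ to survive in the new demand, yielding $v(i \vert A) - \epsilon \geq v(j \vert A) - x$. Your route instead invokes the M$^\natural$-concave exchange axiom once with $S_1 = A \cup \{j\}$, $S_2 = A \cup B$, $a = j$, and uses Lemma~\ref{lemma:submodular_fact} to collapse the $v(A \cup j \cup B')$ terms; this dispatches both cases in two lines and in fact gives the sharper conclusion $v(i \vert A) \geq v(B \vert A)$, not merely $> v(j \vert A) - x$. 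The cost of your approach is exactly the one you flag: the paper only states the demand-based definition, so you are importing the GS\,$\Leftrightarrow$\,M$^\natural$-concavity equivalence as a black box (legitimate, given the Murota citation already in the paper). The paper's proof buys self-containment---it never leaves the demand formulation---at the price of the $\epsilon$-perturbation and the two-stage price-raising argument; yours buys brevity and uniformity across parameter regimes by appealing to a deeper structural characterization.
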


\begin{proof}
Let $\epsilon = \frac{1}{2n}[v(B \vert A) - v(j \vert A) + x]$ and
define a price vector $p$ such that $p_j = x$, $p_i = 0$ for $i \in A$, $p_i = \epsilon$ for $i \in B$ and $p_i =\infty$ otherwise. Let $S$ be a maximal set in $D(v;p)$. Clearly $A \subseteq S$ since the elements in $A$ have zero price. Now, we argue that $S\setminus A \subseteq B$. We note that if $j \in S$, then no element in $B$ can be in $S$, since $v(i \vert A \cup j) = 0$ and $p_i > 0$ for all $i \in B$ while $p_i>0$. But $A \cup j$ is not optimal, since $v(B \vert A) - \epsilon \abs{B} > v(j \vert A) - x $. Therefore, $S$ must be a subset of $A \cup B$. Also, since $v(B \vert A) - \epsilon \abs{B} > v(j \vert A) - x \geq 0$, there must be some $i \in B \cap S$.

Consider such $i$ and the price vector $p'$ such that $p'_k = p_k$ for all $k \in A \cup \{i,j\}$ and $p'_k = \infty$ otherwise. By the definition of gross subtitutes, there must be $S' \in D(v,p')$ such that $i \in S'$. Since items in $A$ have zero price, we can assume that $A \subseteq S'$. Since $i$ has positive price and $v(i \vert A \cup j) = 0$, then $j$ can't be in $S'$. So, $S' = A \cup i$. This implies that $v(A \cup i) - p(A \cup i) \geq v(A \cup j) - p(A \cup j)$, which can be re-written as: $v(i \vert A) > v(i \vert A) - \epsilon \geq v(j \vert A) - x$.
\end{proof}

\begin{theorem}
Let the valuation $v$ be gross substitutes, $X$ be a \textsc{Gs}-consistent decision map and $p$ be a vector of independent random variables forming a mixed Nash equilibrium.
It holds that if $v(i \vert N \setminus i) > 0$ then $p_i = v(i \vert N \setminus i)$ and $i\in X(p)$ deterministically.
Additionally, if $v(i \vert N \setminus i) = 0$, then seller $i$ has expected utility of zero at any mixed Nash equilibrium, i.e., $\E[u_i(p)] = 0$.
\end{theorem}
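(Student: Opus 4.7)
The plan is to handle the two cases of the theorem separately. Throughout I write $\alpha_i = v(i\vert N\setminus i)$, use that gross substitutes implies submodular, and rely on Lemma~\ref{lemma:pre_uniqueness_mixed}, which gives $p_j\geq \alpha_j$ almost surely in any mixed Nash equilibrium.

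\emph{Case $\alpha_i=0$.} Submodularity yields $v(i\vert S)\leq v(i\vert N\setminus i)=0$ for every $S\subseteq N\setminus i$, so $v(i\vert S)=0$ identically. Hence on any realization with $p_i>0$, every set $S\ni i$ satisfies $v(S\setminus i)=v(S)$ while $p(S\setminus i)<p(S)$, so $S\notin D(v;p)$ and therefore $i\notin X(p)$. Thus $u_i(p)=p_i\one\{i\in X(p)\}=0$ pointwise (either $p_i=0$ or the indicator vanishes), giving $\E[u_i(p)]=0$.

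\emph{Case $\alpha_i>0$, lower bound.} Fix $\epsilon\in(0,\alpha_i)$ and consider the deterministic pure deviation $q=\alpha_i-\epsilon$. Submodularity gives $v(i\vert S)\geq \alpha_i > q$ for every $S\subseteq N\setminus i$, so for any realization of $p_{-i}$ the buyer strictly prefers adding $i$ to any candidate subset; hence $i\in X(q,p_{-i})$ deterministically, with resulting utility $q$. The Nash condition then yields $\E[u_i(p)]\geq q = \alpha_i-\epsilon$, and letting $\epsilon\to 0$ gives $\E[u_i(p)]\geq \alpha_i$.

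\emph{Case $\alpha_i>0$, upper bound and conclusion.} The crux is the matching bound $\E[u_i(p)]\leq\alpha_i$, which I plan to establish pointwise: for every realization of $p_{-i}$ with $p_k\geq\alpha_k$ and every pure deviation $q$, $u_i(q,p_{-i})\leq\alpha_i$. The case $q\leq\alpha_i$ is trivial. For $q>\alpha_i$ with $i\in S:=X(q,p_{-i})$, buyer optimality between $S$ and $S\setminus i$ forces $v(i\vert S\setminus i)\geq q>\alpha_i$, and by submodularity $B:=N\setminus S$ is nonempty. I would then invoke Lemma~\ref{lemma:technical_gs} with $A=S\setminus i$, $j=i$, $x=q$ and this $B$ to extract some $k\in B$ with $v(k\vert S\setminus i)>v(i\vert S\setminus i)-q$, contradicting the optimality of $S$ over the alternative $S\setminus i\cup\{k\}$. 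Matching the lemma's marginal-value hypothesis $v(k\vert A\cup j)=0$ to the actual realization is the main technical obstacle and is where \textsc{Gs}-consistency of $X$ enters, since it lets one reason about how the decision set changes under small local price perturbations. Integrating the pointwise bound gives $\E[u_i(p)]\leq\alpha_i$; combined with the lower bound this forces $\E[u_i(p)]=\alpha_i$. Because the upper bound holds pointwise, equality in expectation forces $u_i(p)=\alpha_i$ almost surely, and writing $u_i(p)=p_i\one\{i\in X(p)\}$ with $p_i\geq\alpha_i>0$ a.s.\ simultaneously pins down both $i\in X(p)$ and $p_i=\alpha_i$ almost surely, completing the proof.
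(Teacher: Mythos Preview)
Your proposal contains genuine gaps in both cases.

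\emph{Case $\alpha_i=0$.} The submodularity inequality is reversed. For $S\subseteq N\setminus i$ submodularity gives $v(i\vert S)\ge v(i\vert N\setminus i)$, not $\le$; indeed you use the correct direction two paragraphs later in the lower-bound argument. Bertrand competition already refutes your claim: with $v(\{1\})=v(\{2\})=v(\{1,2\})=1$ one has $\alpha_1=0$ yet $v(1\vert\emptyset)=1>0$, so it is certainly not true that $i\notin X(p)$ whenever $p_i>0$. The conclusion $\E[u_i(p)]=0$ is the hard part of the theorem, not an immediate consequence of submodularity.

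\emph{Case $\alpha_i>0$, upper bound.} The pointwise bound you aim for is false. Take two items with $v(\{1\})=3$, $v(\{2\})=2$, $v(\{1,2\})=3$ (this is gross substitutes), so $\alpha_1=1$, $\alpha_2=0$. At the realization $p_2=10\ge\alpha_2$ and $q=2$, the buyer picks $\{1\}$ and seller~$1$ earns $2>\alpha_1$. So you cannot hope to prove $u_i(q,p_{-i})\le\alpha_i$ for every realization with $p_k\ge\alpha_k$; the bound genuinely requires the equilibrium behavior of the other sellers, not merely the floor $p_k\ge\alpha_k$. Relatedly, the ``contradiction'' you draw from Lemma~\ref{lemma:technical_gs} is not one: optimality of $S$ over $S\setminus i\cup\{k\}$ reads $v(i\vert S\setminus i)-q\ge v(k\vert S\setminus i)-p_k$, and the lemma only delivers $v(k\vert S\setminus i)>v(i\vert S\setminus i)-q$; the missing $p_k$ term is exactly what lets the counterexample above go through. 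The hypothesis $v(k\vert S)=0$ for all $k\notin S$ that you flag as an obstacle is also not satisfiable from the data you have, since $k$ may be excluded from $S$ simply because $p_k$ is large.

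The paper's argument is structurally different and does not attempt any pointwise bound. It defines the set $S=\{i:\Prob[p_i>\alpha_i]>0\}$ of sellers who sometimes overprice, takes near-top quantiles $p^t_j$ of each mixed strategy, and extracts a limit set $T=X(p^t)$ appearing infinitely often. \textsc{Gs}-consistency is used to show that any $i\notin T$ is selected with probability at most $(n-1)/t\to 0$, forcing $\E[u_i(p)]=0$, which in turn forces $v(i\vert T)=0$ for $i\notin T$. Only then is Lemma~\ref{lemma:technical_gs} applied---to the limit set $T$, where the marginal-zero hypothesis has just been established---to rule out $S\cap T\neq\emptyset$. The upshot is $S\subseteq N\setminus T$, hence every overpricing seller has zero expected utility; the claims for both $\alpha_i=0$ and $\alpha_i>0$ then follow.
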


\begin{proof}
We know by Lemma \ref{lemma:pre_uniqueness_mixed} that $p_i \geq v(i \vert N \setminus i)$ with probability $1$.
Let $S = \{i \in N;  \Prob [ p_i > v(i \vert N \setminus i) ] > 0\}$. If we show that for every $i \in S$, $\E[u_i(p)] = 0$, then we are done, since then clearly $v(i \vert N \setminus i) = 0$, otherwise this seller would be able to get positive utility by posting a price $\frac{1}{2} v(i \vert N \setminus i)$.

Let $p_i^t$ be some  price in the support of $p_i$ for which $\E[u_i(p)] = \E[u_i(p_i^t, p_{-i})]$, $\Prob[ p_i \geq p_i^t ] \geq \textstyle\frac{1}{t}$ and $\Prob[ p_i \leq p_i^t ] \geq 1-\textstyle\frac{1}{t}$. Define $p^t = (p_1^1, \hdots, p_n^t)$. Since $X(p^t)$ can take finitely many values, there is a set $T \subset N$ such that $T = X(p^t)$ infinitely often. We know that $S \not\subseteq T$, since for very large $t$, $p_i^t > v(i \vert N \setminus i)$ for all $i \in S$.

For any $i \notin T$ we know that
$$\E[u_i(p)] = \E[u_i(p_i^t, p_{-i})] = p_i^t \cdot \Prob[ i \in X(p_i^t, p_{-i})]$$
where the first equality comes from  $p_i^t$ being a best response to $p_{-i}$. Now, since $X$ is a \textsc{Gs}-consistent decision map, if $p_{-i} \leq p^t_{-i}$, then $i \notin X(p_i^t, p_{-i})$, therefore:
$$\begin{aligned}\Prob[ i \in X(p_i^t, p_{-i})] & \leq \Prob[ \exists j \neq i ; p_{j} > p_{j}^t ] \leq \\ & \leq \textstyle\sum_{j \neq i} \Prob[ p_{j} > p_{j}^t ] \leq \textstyle\frac{n-1}{t}\end{aligned}$$
Taking $t \rightarrow \infty$ we get that $\E[u_i(p)] = 0$.

Now we claim that  $v(i \vert T) = 0$. If not, then the price $p'_i = \frac{1}{2} v(i \vert T)$ would guarantee that: $i \in X(p')$ for $p' = (p'_i, p^t_{-i})$ since $v(T \cup \{i\}) - p'(T \cup \{i\}) > v(T) - p'(T) = \max_{S'; i \notin S'} v(S') - p'(T')$. Therefore, by the fact that the valuations are gross substitutes:
$$\begin{aligned}
\E[u_i(p)] & \geq \E[u_i(p'_i, p_{-i})] = p'_i \cdot \Prob[i \in X(p'_i, p_{-i})]\\
& \geq p'_i \cdot \Prob[p_{j} \geq p^t_{j}, \forall j \neq i] \geq \frac{1}{t^{n-1}}> 0\end{aligned}$$
which contradicts that $\E[u_i(p)] = 0$. So, it must be the case that $v(i \vert T) = 0$. In particular, $v(N) = v(T)$ by Lemma \ref{lemma:submodular_fact}.

Now, in order to complete the proof, we want to show that $S \cap T = \emptyset$, since this implies that if for some seller $j$, $\Prob [ p_j > v(j \vert N \setminus j) ] > 0$, then $j \notin T$ and therefore $\E[u_j(p)]= 0$. Assume for contradiction that there is $j \in S \cap T$ and take $t$ large enough we can assume that $p^t_j > v(j \vert N \setminus j)$. Then:
$$v(j \vert T\setminus j) - p_j^t < v(j \vert T\setminus j) - v(j \vert N \setminus j) = v(N \setminus T \vert T \setminus j)$$
since $v(T) = v(N)$.
\comment{
Since
$$\begin{aligned}
v(j \vert T \setminus j) - v(j \vert N \setminus j) &= v(T) - v(T \setminus j) - v(N) + v(N \setminus j)  \\ &= v(N \setminus j) - v(T \setminus j) = v( N\setminus T \vert T \setminus j)
\end{aligned}$$}
This allows us to apply Lemma \ref{lemma:technical_gs} with $j$, $A = T \setminus j$, $B = N \setminus T$ and $x = p_j$, we get that there is $i \notin T$ such that: $v(j \vert T \setminus j) - p_j^t < v(i \vert T \setminus j)$. Therefore, if such $i$ deviates to price $p'_i > 0$ such that $v(j \vert T \setminus j) - p_j^t < v(i \vert T \setminus j) - p'_i$, then $i \in X(p')$ for $p' = (p'_i, p^t_{-i})$, since $v(T \cup i \setminus j) - p'(T \cup i \setminus j) > v(T) - p'(T) \geq \max_{S' \not\ni i} v(S') - p'(S')$. Therefore:
$$\E[u_i(p)] \geq \E[u_i(p'_i, p_{-i})] \geq p'_i \cdot \Prob[p_{j} \geq p^t_{j}, \forall j \neq i] > 0$$
contradicting that $\E[u_i(p)]=0$. This shows that there can't be a seller $j$ in $S \cap T$, concluding the proof.
\end{proof}

\section{Extensions of the basic model}\label{sec:extensions}

In this section we explore some natural generalizations of the basic
model. We first consider costly services, then we consider valuation uncertainty
and the connection to the model of multiple buyers with no price discrimination.
Finally, we consider the case that a single seller controls multiple services
but can only price individual services.

\subsection{Service costs}
\label{sec:costs}

In the basic model we assumed that the seller has no cost in providing the
service.
Now, we consider the extension in which each seller $i$ has 
cost $c_i \geq 0$ for providing the service.
We model
this by changing the utility of each seller $i$ to:
$$u_i(p) = (p_i - c_i) \cdot \one\{ i \in X(p)\}$$
Clearly in this new game, it is a dominated strategy to post a price $p_i <
c_i$. The first thing we notice is that maximality of the map is not enough to
guarantee existence of pure Nash equilibrium anymore. Consider, the
following example:

\begin{example}
Consider a game with two sellers and a buyer with valuation $v(S) = 3$
if $\abs{S} \geq 1$ and $v(\emptyset) = 0$, and two sellers with costs
$c_1 = 1$ and $c_2 = 2$. Also, let the seller break ties in favor of the
costlier seller, i.e., $X(p) = \{i\}$ if $p_i < p_{3-i} \leq 3$, $X(p) = \{2\}$
if $p_1 = p_2 \leq 3$ and $X(p) = \emptyset$ otherwise. The map $X$ is maximal,
and yet there is no pure Nash equilibrium: there is no equilibria with $\min\{p_1, p_2\} > 3$,
since one of the sellers can decrease his price and be selected. For the case $\min\{p_1, p_2\} \leq 3$,  there is no equilibria with
$p_1 \neq p_2$  since the seller with lowest price can improve his utility by slightly increasing his price. We are left with $p_1 = p_2 \leq 3$. By the tie breaking rule, $X(p) =
\{2\}$ and since $c_2 = 2$, we must have $p_1 = p_2 \geq 2$, otherwise $2$ would
be getting negative utility. This can't be an equilibrium since $p_1$ can
decrease his price and be selected.
\end{example}

However, we show that with an additional mild assumption on the decision map $X$, we
can prove the existence of $\epsilon$-Nash equilibria for any $\epsilon > 0$.
We say that a map is \emph{up-consistent} if for any price vector $p$ if $X(p)
= S$ and $p'_i > p_i$, then either $X(p'_i, p_{-i}) = S$ or $i \notin X(p'_i,
p_{-i})$. Lexicographically breaking ties is up-consistent as we show next.

\begin{lemma}
The map that chooses the lexicographically first set among the (maximal) sets of
$D(v;p)$ is (maximal) up-consistent.
\end{lemma}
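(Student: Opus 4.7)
The plan is to argue by contradiction. Suppose $X(p) = S$ and write $p' = (p'_i, p_{-i})$ with $p'_i > p_i$; assume $X(p') = T$ with $T \neq S$ and $i \in T$. The goal is to force $S = T$, contradicting $T \neq S$.

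First I would dispatch the easy case $i \notin S$. Since $p$ and $p'$ agree outside coordinate $i$, we have $p(S) = p'(S)$, while $p(T) = p'(T) - (p'_i - p_i) < p'(T)$ because $i \in T$. Combining with $T \in D(v;p')$, $v(T) - p(T) > v(T) - p'(T) \geq v(S) - p'(S) = v(S) - p(S)$, contradicting $S \in D(v;p)$. So any surviving counterexample must satisfy $i \in S \cap T$.

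In this main case the key observation is a uniform-shift principle: moving from $p$ to $p'$ subtracts $p'_i - p_i$ from the utility of every set containing $i$ and leaves all other utilities unchanged. Applied to $S$ and $T$, both of which contain $i$, one quickly deduces $v(S) - p(S) = v(T) - p(T)$ and $v(S) - p'(S) = v(T) - p'(T)$, so $S$ and $T$ both lie in $D(v;p) \cap D(v;p')$. For the unrestricted lex-first rule, $X(p) = S$ then forces $S \leq_{\mathrm{lex}} T$ and $X(p') = T$ forces $T \leq_{\mathrm{lex}} S$, hence $S = T$, which is the contradiction.

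For the maximal variant, one extra verification is required: that $S$ and $T$ are maximal members of $D(v;p)$ and of $D(v;p')$ simultaneously. The same uniform-shift observation handles this---any strict superset $U$ of $S$ (resp.\ of $T$) automatically contains $i$, so by the shift principle $U \in D(v;p)$ iff $U \in D(v;p')$, and thus maximality at one price vector transfers to the other. I expect this preservation-of-maximality bookkeeping to be the only subtle point in the proof; once in hand, the lex-first comparison from the preceding paragraph closes the maximal case identically.
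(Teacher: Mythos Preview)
Your proof is correct and follows essentially the same line as the paper's: both hinge on the observation that raising $p_i$ shifts the buyer's utility of every set containing $i$ by the same constant, so membership in the demand correspondence is preserved among such sets and the lexicographic choice cannot change. Your write-up is in fact more complete than the paper's, which tacitly skips the easy case $i\notin S$ and does not spell out the maximality bookkeeping you identify.
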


\begin{proof}
Let $S = X(p)$, $i \in S$, $p'_i > p_i$ and $p'=(p'_i, p_{-i})$. We want to show
that if $i \in X(p')$, then $X(p') = S$. By the
lexicographic rule, if $X(p') \neq S$, there must be $T \in D(v;p') \setminus
D(v;p)$ such that $i \in T$. Given that $T \notin D(v;p)$, there must be a set
$T'$ such that $v(T) - p(T) < v(T') - p(T')$. Since $p$ coincides with $p'$
except from coordinate $i$ and $i \in T$, then: $v(T) - p'(T) < v(T') -
p'(T')$, therefore $T$ can't be in $D(v;p')$.
\end{proof}

Using the definition of up-consistency, we show the existence of
$\epsilon$-Nash for any $\epsilon > 0$. Moreover, the following theorem also
states that there are $\epsilon$-Nash equilibria with optimal welfare, where the
welfare function defined as the sum of the utilities of the agents (the sellers
and the buyer): $W(p) = v(X(p)) - c(X(p))$. This implies a Price of Stability
of one.

\begin{theorem}
Consider any combinatorial valuation $v$, a vector of service costs $c$ and an
up-consistent decision map $X$. For every $\epsilon > 0$, there exists an $\epsilon$-Nash equilibrium $p$ with $X(p) = X(c)$.
\end{theorem}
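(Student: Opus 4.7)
My plan adapts the Pareto-maximization argument of Theorem~\ref{cor:non_empty} to the setting with costs, and then applies an $\epsilon$-perturbation in the spirit of Theorem~\ref{lemma:any_map} to accommodate the up-consistent (rather than maximal) decision map. Let $S^* = X(c)$, which by definition maximizes $v(S) - c(S)$. Consider the cost-respecting feasible set
\[ F = \{\, p \in \R^n_+ : p_i \geq c_i \text{ for } i \in S^*,\ p_i = c_i \text{ for } i \notin S^*,\ S^* \in D(v;p)\, \}, \]
which contains $c$ and is compact, hence admits a Pareto-maximum $p^*$ with respect to $\sum_{i \in S^*} p_i$. By the same pigeonhole/subsequence argument as in Theorem~\ref{cor:non_empty}, for each $i \in S^*$ there is a set $T_i \not\ni i$ with $v(T_i) - p^*(T_i) = v(S^*) - p^*(S^*)$: this is the binding alternative that prevents any further increase of $p^*_i$ within $F$.

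Setting $\eta = \epsilon / n$, I define the perturbed profile
\[ p^\epsilon_i = \max(p^*_i - \eta,\, c_i) \text{ for } i \in S^*, \qquad p^\epsilon_i = c_i + \eta \text{ for } i \notin S^*. \]
A direct computation shows that $[v(S^*) - p^\epsilon(S^*)] - [v(T) - p^\epsilon(T)]$ equals its value at $p^*$ plus $\sum_{i \in S^* \setminus T}(p^*_i - p^\epsilon_i) + \eta\,|T \setminus S^*|$, which in the generic case (every $p^*_i > c_i$ for $i \in S^*$) is strictly positive for every $T \neq S^*$. Hence $D(v; p^\epsilon) = \{S^*\}$, so $X(p^\epsilon) = S^*$ regardless of the decision map's tie-breaking rule, establishing the $X(p) = X(c)$ conclusion.

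For the $\epsilon$-Nash verification: a seller $i \in S^*$ raising his price by $\delta$ keeps $S^*$ in the demand (and himself selected) only up to $\delta \leq \eta \cdot |S^* \triangle T_i| \leq \eta n = \epsilon$; any larger deviation puts $S^*$ strictly below $T_i$ in the demand, and since every set in $D(v; p^\epsilon + \delta e_i)$ excludes $i$ (gaps to sets $T' \ni i$ are invariant under the $p_i$-increase), $i$ is dropped and his utility change is bounded by $\eta$. A seller $i \notin S^*$ cannot improve: raising $p_i$ only widens the gap from $S^*$ to any $T \ni i$, and lowering $p_i$ enough for such a $T$ to enter the demand requires $p_i \leq c_i$, yielding nonpositive utility. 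The principal obstacle is the corner case where some $p^*_i = c_i$: the downward perturbation on $i$ is blocked, so $D(v; p^\epsilon)$ may contain a subset $T \subsetneq S^*$ that already tied with $S^*$ at $p^*$, and up-consistency has to be invoked along the coordinate-by-coordinate path from $c$ (where $X(c)=S^*$) to $p^\epsilon$ to argue that the map picks $S^*$ rather than such a $T$ at $p^\epsilon$.
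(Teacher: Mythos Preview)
Your approach differs substantively from the paper's, and the corner case you flag is a genuine obstacle that your proposed fix does not resolve.

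The paper does not go through a Pareto-maximal point and then perturb. Instead it runs an iterative procedure: start at $p=c$ (so $X(p)=S^*=X(c)$), and repeatedly pick some $i\in S^*$ and some $p'_i\geq p_i+\epsilon$ with $i\in X(p'_i,p_{-i})$, updating $p_i\leftarrow p'_i$, as long as such a raise exists. Prices $p_j$ for $j\notin S^*$ are never moved. The crucial point is that the condition $i\in X(p'_i,p_{-i})$ is \emph{verified before} the update; combined with up-consistency (either $X(\cdot)=S^*$ or $i\notin X(\cdot)$), this forces $X$ to remain $S^*$ after every step. When the procedure halts, every $i\in S^*$ has the property that any raise of size at least $\epsilon$ drops $i$ from $X$, which is exactly the $\epsilon$-Nash condition for $i\in S^*$; the argument for $j\notin S^*$ is the standard one (any $p'_j>c_j$ makes every set containing $j$ strictly worse than $S^*$).

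Your path argument, by contrast, cannot pin down $X(p^\epsilon)$ in the corner case. When you raise a coordinate $j\notin S^*$ along the path, up-consistency tells you only that $j\notin X$ at the new vector---information you already had---and does \emph{not} prevent $X$ from switching to some proper subset $T\subsetneq S^*$ that is still in the demand. When you raise a coordinate $i\in S^*$, up-consistency yields $X(\cdot)=S^*$ only if you independently know $i$ stays selected; but that is precisely what you are trying to establish, so the step is circular. Thus an up-consistent map with $X(c)=S^*$ may perfectly well have $X(p^\epsilon)=T$ for some $T\subsetneq S^*$ that ties with $S^*$ at $p^\epsilon$, defeating the conclusion $X(p)=X(c)$.

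The Pareto-max-then-perturb template works in the costless setting because every price can be lowered to break ties; here the floor $p_i\geq c_i$ can force $p^*_i=c_i$, and no perturbation can simultaneously respect the floor and make $S^*$ uniquely demanded. The paper's procedure sidesteps this entirely: it never needs $S^*$ to be the unique demanded set, only that each individual raise is one at which the decision map still picks the raised seller.
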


\begin{proof}
 Fix $\epsilon > 0$, let $S = X(c)$ and define:
 $$F = \left\{p \in \R_+^{N} ;   p \geq c, X(p) = S \right\}$$
 Clearly $c \in F$. Set initially $p = c$ and while there exists some $i \in S$
such that $i \in X(p'_i, p_{-i})$, for some $p'_i \geq p_i + \epsilon$, update
$p_i$ to $p'_i$. By up-consistency we mantain the invariant that $X(p) = S$
during this
procedure. When it stops, we have a vector $p$ such that $X(p) = S$, for all
$i \in S$ and $p'_i \geq p_i + \epsilon$, $X(p'_i) \not\ni i$ and for $i \notin
S$, $p_i = c_i$.

 Now, it is easy to see that this is an $\epsilon$-Nash equilibrium: for $i \in
S$, $u_i = p_i$. In order to increase his utility by $\epsilon$, each seller
needs to deviate to a price $p'_i \geq p_i + \epsilon$, but from the
construction he won't be allocated at that price. For $i \notin S$, we know
that for any set $T \ni i$, $v(S) - p(S) \geq v(T) - p(T)$, so if $i$ deviated
to any price $p'_i > c_i$, then for $p' = (p'_i, p_i)$ we would have $v(S) -
p'(S) > v(T) - p'(T)$, hence $i \notin X(p')$.
\end{proof}

The following result characterizes the sets picked in equilibrium:

\begin{lemma}
If $p$ is a pure Nash equilibrium (or the limit of $\epsilon$-Nash equilibria as $\epsilon \rightarrow 0$), then:
 $$v(i \vert S \setminus i) \geq c_i, \forall i \in S$$
 $$v(T \cup j) + c(S \setminus T) - c_j \leq v(S), \forall T \subseteq S \text{ and } j \notin S$$
\end{lemma}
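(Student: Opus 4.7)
My plan is to derive each of the two inequalities directly from the unilateral best-response conditions in a pure Nash equilibrium, and then obtain the $\epsilon$-Nash limit case by a subsequential limit argument. Let $S = X(p)$ throughout.

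For the first inequality, I would first observe that $p_i \geq c_i$ for every $i \in S$: otherwise seller $i$ earns strictly negative utility while being selected, and strictly profits by raising his price high enough to become unselected (receiving utility $0$). Next I would apply the demand-optimality $S \in D(v;p)$ to the subset $S \setminus i$, which yields $v(S) - p(S) \geq v(S \setminus i) - p(S \setminus i)$, rearranging to $v(i \vert S \setminus i) \geq p_i$. Concatenating these two bounds gives $v(i \vert S \setminus i) \geq p_i \geq c_i$.

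For the second inequality, fix $j \notin S$ and $T \subseteq S$. Seller $j$'s current utility is $0$, so for any deviation $p'_j > c_j$ the chosen set $X(p'_j, p_{-j})$ cannot contain $j$ --- else the deviation would yield utility $p'_j - c_j > 0$, violating Nash. Writing $p' = (p'_j, p_{-j})$, I would then compare the candidate bundle $T \cup j$ against $X(p')$ at price $p'$:
\[
v(T \cup j) - p(T) - p'_j \;\leq\; v(X(p')) - p'(X(p')) \;=\; v(X(p')) - p(X(p')) \;\leq\; v(S) - p(S),
\]
where the middle equality uses $j \notin X(p')$ (so the $j$-coordinate price is irrelevant to $X(p')$'s cost) and the last inequality uses $S \in D(v;p)$. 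Rearranging gives $p'_j \geq v(T \cup j) - v(S) + p(S \setminus T)$ for every $p'_j > c_j$, and taking the infimum over such $p'_j$ yields $c_j \geq v(T \cup j) - v(S) + p(S \setminus T)$. Invoking the first step to bound $p(S \setminus T) \geq c(S \setminus T)$ and substituting then gives the stated second inequality.

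The $\epsilon$-Nash limit case would follow by applying the same argument to each $\epsilon$-Nash equilibrium $p^\epsilon$, which yields both inequalities with an $O(\epsilon)$ additive slack on the right-hand side; then I would pass to a subsequence along which $X(p^\epsilon)$ is constantly equal to some set $S$ (possible since $X$ takes only finitely many values in $2^N$) and let $\epsilon \to 0$. I expect the main obstacle to be the deviation-blocking step of the second inequality: because $X$ is an arbitrary decision map (possibly non-maximal with arbitrary tie-breaking), one cannot directly inspect $X(p')$, and the crux is extracting a useful price/welfare inequality purely from the two abstract facts that $X(p') \in D(v; p')$ and $X(p') \not\ni j$. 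Once this is in hand, the comparison against $T \cup j$ and the rearrangement that produces the $c(S \setminus T)$ term on the left are routine.
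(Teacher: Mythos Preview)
Your proposal is correct and follows essentially the same route as the paper: the first inequality from $S \in D(v;p)$ plus $p_i \geq c_i$, and the second from the fact that no deviation $p'_j > c_j$ can get $j$ selected, together with $p_i \geq c_i$ to convert prices into costs. Your treatment is in fact more detailed than the paper's (which jumps directly to $v(S)-p(S)\geq v(T\cup j)-p(T)-p'_j$ without spelling out the intermediate comparison via $X(p')$, and does not explicitly handle the $\epsilon$-Nash limit case).
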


\begin{proof}
The first part follows from the fact that $S \in D(v;p)$ and therefore $v(S) -
p(S) \geq v(S \setminus i) - p(S \setminus i)$ and the fact that $p_i \geq c_i$.
The second statement comes from the fact that for $j \notin S$ and $p'_j > c_j$,
$i \notin X(p'_j, p_{-j})$, in particular: $v(S) - p(S) \geq v(T \cup j) - p(T)
- p'_j$. Taking $p'_j \rightarrow c_j$ and using the fact that $p_i \geq c_i$,
we get the desired result.
\end{proof}

In particular, $v(S) - c(S) \geq \max\{ v(S \setminus i) - c(S \setminus i),
v(S \cup j) - c(S \cup j), v(S \cup j \setminus i) - c(S \cup j \setminus i),
\forall i \in S, j \notin S\}$. This corresponds to a minimum of the local
search procedure that seeks to optimize $v(S) - c(S)$ by either adding an
element, removing an element or swapping an element in the set by an element
outside. Gul and Stachetti \cite{GulStachetti} show that if a valuation function
$v$ is gross subtitutes, then this procedure doesn't get stuck in suboptimal
local maxima. In particular, this implies that:

\begin{corollary}
If the valuation $v$ is gross substitutes and $p$ is a Nash equilibrium (or a
limit of $\epsilon$-Nash equilibria), then the welfare of the allocation $W(p) =
v(X(p)) - c(X(p))$ is optimal, i.e., $W(p) = \max_{S \subseteq N} v(S) - c(S)$.
In other words, the Price of Anarchy of this game, for gross substitute
valuations is $1$.
\end{corollary}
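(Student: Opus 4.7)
The plan is to combine the previous lemma with a classical structural result for gross substitutes valuations due to Gul and Stacchetti. The previous lemma already encodes, in its two displayed inequalities, a complete description of the local behavior of $v(S) - c(S)$ at $S = X(p)$ under three elementary move types. First I would translate those inequalities into the standard local search statements. The inequality $v(i \vert S \setminus i) \geq c_i$ for $i \in S$ is literally $v(S) - c(S) \geq v(S \setminus i) - c(S \setminus i)$, ruling out improvement by removing a single element. Specializing the second inequality to $T = S$ gives $v(S \cup j) - c_j \leq v(S)$, i.e. $v(S) - c(S) \geq v(S \cup j) - c(S \cup j)$, ruling out addition of a single element. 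Specializing to $T = S \setminus i$ with $j \notin S$ gives $v(S \cup j \setminus i) - c(S \cup j \setminus i) \leq v(S) - c(S)$, ruling out any single swap.

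Next I would apply the Gul--Stacchetti single improvement property. For a gross substitutes valuation $v$, viewing the cost vector $c$ as a price vector, the buyer's quasi-linear objective $S \mapsto v(S) - c(S)$ has the property that every local maximum with respect to add/remove/swap moves is already a member of $D(v; c)$, hence a global maximizer of $v(S) - c(S)$. Since the three inequalities above precisely state that $S$ is such a local maximum, we conclude $v(S) - c(S) = \max_{T \subseteq N} v(T) - c(T)$, which is exactly $W(p) = \max_{T} v(T) - c(T)$.

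Finally I would address the limit case in the corollary statement. The preceding lemma already shows that its two inequalities hold for any limit of $\epsilon$-Nash equilibria, since each inequality is derived by an argument that remains valid in the $\epsilon \to 0$ limit. Therefore the same local-optimality argument goes through unchanged, and the Gul--Stacchetti step applies to the limiting set $X(p)$ as well.

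The main obstacle is really just the invocation of the non-trivial external result: one needs exactly the local-to-global optimality theorem for gross substitutes, and needs to verify that the neighborhood in which our equilibrium is locally optimal (single add, single remove, single swap) matches the neighborhood for which Gul--Stacchetti prove no spurious local maxima exist. No additional calculation beyond rewriting the two inequalities of the preceding lemma is needed; the content of the corollary is a clean application of that theorem.
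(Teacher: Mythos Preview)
Your proposal is correct and follows essentially the same approach as the paper: specialize the two inequalities of the preceding lemma to obtain local optimality of $S=X(p)$ under single add, remove, and swap moves, then invoke the Gul--Stacchetti single-improvement property to upgrade local to global optimality. The paper's argument is exactly this, stated just as tersely; your treatment is, if anything, slightly more explicit about the specializations and the limit case.
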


In the following example we show that submodularity is not enough to guarantee
that all Nash equilibria have optimal welfare, unlike the case without costs.
In fact, the welfare of a Nash equilibrium might be arbitrarily smaller then the
optimal welfare, i.e., the Price of Anarchy is
unbounded.

\begin{example}
\label{ex:unbounded_poa_submodular}
{Let $k$ and $\ell$ be integers with $k \gg \ell$ and $k$ odd. Now,
consider a pricing game with $k + \ell+1$ sellers indexed by $\{0, 1,
\hdots, k+\ell \}$ with costs $c_i = k-\ell$ for $i=0,...,\ell$ and $c_i = 0$ for
$i=\ell+1,...,\ell+k$. Define the valuation of the buyer as follows: $v(S) =\min
\{k\cdot \ell, \sum_{i \in S} w_i \}$ where $w_i = k$ for $i=0,...,\ell$ and $w_i =
\ell$ for $i=\ell+1,...,\ell+k$. Assume that the buyer breaks ties
lexicographically.}

{Consider now the following price vector: $p_i = k-\ell$
for $i=0,...,\ell$ and $p_i = \ell$ for $i=\ell+1,...,\ell+k$.
First we claim that this price vector is a Nash equilibrium. By the
lexicographic rule, $X(p) =\{0, \hdots, \ell-1\}$. For any seller $i \in X(p)$,
if $i$ increases his price, the buyer will replace him by seller $\ell$, choosing the set $\{0,1,
\hdots, \ell\} \setminus i$. For seller $\ell$, he cannot become selected by
increasing his price and if he decreases his price he gets negative utility
since his cost is $k -\ell$. Now, for any seller $i \in \{\ell+1, \hdots,
\ell+k\}$, they are not selected even if they decrease their price to zero.
Now, notice that $W(p) = v(X(p)) - c(X(p)) = \ell \cdot \ell$. For the set
$S^* = \{\ell+1, \hdots, \ell+k\}$, $v(S^*) -
c(S^*) = \ell \cdot k$. As $k/\ell \rightarrow \infty$ the gap
between the optimal welfare and the welfare of this Nash equilibrium goes to
infinity.}
\end{example}

\comment{
\begin{example}
 {Let $k$ be a large odd integer and $\ell= \frac{1}{2}(k+1)$. Now,
consider a pricing game with $k + \ell+1$ sellers indexed by $\{0, 1,
\hdots, k+\ell \}$ with costs $c_i = k-\ell$ for $i=0,...,\ell$ and $c_i = 0$
for
$i=\ell+1,...,\ell+k$. Define the valuation of the buyer as follows: $v(S) =\min
\{k\cdot \ell, \sum_{i \in S} w_i \}$ where $w_i = k$ for $i=0,...,\ell$ and
$w_i =
\ell$ for $i=\ell+1,...,\ell+k$. Assume that the buyer breaks ties
lexicographically.}

{Consider now the following price vector: $p_i = k-\ell$
for $i=0,...,\ell$ and $p_i = \ell$ for $i=\ell+1,...,\ell+k$.
First we claim that this price vector is a Nash equilibrium. By the
lexicographic rule, $X(p) =\{0, \hdots, \ell-1\}$. For any seller $i \in X(p)$,
if $i$ increases his price, the buyer will replace him by seller $\ell$,
choosing the set $\{0,1,
\hdots, \ell\} \setminus i$. For seller $\ell$, he cannot become selected by
increasing his price and if he decreases his price he gets negative utility
since his cost is $k -\ell$. Now, for any seller $i \in \{\ell+1, \hdots,
\ell+k\}$, they are not selected even if they decrease their price to zero.
Now, notice that $W(p) = v(X(p)) - c(X(p)) = \ell \cdot \ell =
\frac{1}{4}(k+1)^2$. for the set $S^* = \{\ell+1, \hdots, \ell+k\}$, $v(S^*) -
c(S^*) = \ell \cdot k = \frac{1}{2}k(k+1)$. As $k \rightarrow \infty$ the gap
between the optimal welfare and the welfare of this Nash equilibrium goes to
$2$.}
\end{example}
}

We also note that even when equilibria exist, it is not necessarily unique, even
when the valuations are submodular. Consider the following example:

\begin{example}
Consider the game with three sellers $\{1,2,3\}$ with costs $0.1$, $0.1$ and
$0.3$ and a buyer with the following submodular valuation $$v(S) = \min\{2,
\one\{1\in S\} +   \one\{2\in S\} + 2 \cdot \one\{3\in S\}  \}$$
Also, let $X$ be the decision map that picks the lexicographically first element
in $D(v;p)$. Now, note that all the vectors $(p_1, p_2, 0.3)$ with $p_1 \geq
0.1$, $p_2 \geq 0.1$ and $p_1 + p_2 \leq 0.3$ are Nash equilibria.
\end{example}

\subsection{Value uncertainty and Multiple buyers}\label{subsec:value_uncertainty}

The basic pricing game assumes that the valuation function of the buyer is public information. We consider here the case where sellers have uncertainty about the true valuation of the buyer, i.e., sellers only know that $v$ is drawn from a certain distribution
$\mathcal{D}$. In such case each seller $i$ seeks to maximize:
$$u_i(p) = \E_{v \sim \mathcal{D}}[p_i \cdot \one\{ i \in X_v(p) \}]$$

If $\mathcal{D}$ is the uniform distribution, this is equivalent to the scenario where there are multiple buyers, each buyer $k\in [m]$ with a valuation function $v_k:2^{N} \rightarrow \R_+$ and sellers are not allowed to price discriminate, i.e, they need to post the same price for all the buyers. Upon facing a price vector $p$, each buyer  purchases the bundle $X_k(p) \in D(v_k, p)$ and the revenue of seller $i$ is:
$$u_i(p) = \textstyle\sum_{k=1}^m p_i \cdot \one \{ i \in X_k(p) \}$$

We observe that the above model is a generalization of \emph{Bertrand networks}
that were defined by Babaioff, Lucier and Nisan \cite{bertrand_networks}. A
Bertrand Network is a game defined on a graph where each node is a seller and
his strategy is to set a price $p_i \in [0,1]$. Each edge corresponds to a set of buyers each interested in buying one item for price at most $1$ and
who chooses to purchase from the accessible seller (incident node) offering the
cheaper price, breaking
ties in a fixed but arbitrary manner. This defines a game among the sellers
whose utility is given by their total revenue, i.e., the price posted multiplied
by number of buyers that decide to purchase from this seller. This naturally maps to an
instance of our pricing game, where each node corresponds to a seller and each
edge $e = (i,j)$ corresponds to a buyer $k$ whose valuation is $v_k(S_k) = 1$ if $S_k \cap
\{i,j\} \neq \emptyset$ and zero otherwise.

Babaioff et al \cite{bertrand_networks} show that pure Nash equilibria might
not exist, but mixed Nash equilibria always exist. Both results carry over to
our setting. The non-existence of pure Nash for some instances follows directly,
since Bertrand Networks is a particular case of our pricing game. The existence
of mixed Nash equilibria proof in \cite{bertrand_networks} is non-trivial and it
\emph{doesn't} follow from Nash's Theorem, as usual, since the strategy space is
infinite and the utility functions are discontinuous. The strategy used for
proving existence of equilibrium, follows by applying a general result by Simon
and Zame \cite{simon_zame}. The same type of technique yields the existence of
mixed Nash in our setting as well. We defer the proof to the full version of this
paper.

We remark that the example in \cite{bertrand_networks} can also be used to show
an instance of the game with multiple buyers without $\epsilon$-Nash equilibria
for small values of $\epsilon$:

\begin{example}
Consider a setting with two sellers $\{A,B\}$ and two buyers with valuations
$v_1(S_1) = \one\{A \in S_1\}$ and $v_2(S_2) = \one\{ S_2 \neq \emptyset\}$. We show
that for sufficiently small $\epsilon$,  no (pure) $\epsilon$-Nash exist
for any tie breaking rule. \comment{Therefore, if $p_1 \leq p_2 \leq 1$, then: $u_1(p) =
2p_1$ and $u_2(p) = 0$. For $p_2 < p_1 \leq 1$, $u_1(p) = p_1$, $u_2(p) =
p_2$.}

Fix some value of $\epsilon$ such that $0 < \epsilon < \frac{1}{10}$ and assume
that $p$ is an $\epsilon$-Nash equilibrium. First we note that it must be the
case that $u_1(p) \geq 1-\epsilon$ since for any $p_2$, $u_1(1,p_2) \geq 1$.
Now, consider two cases:

Case (i) $p_1 \geq 1-\epsilon$. If $p_2 \geq
p_1$, then $u_2(p) = 0$ and seller $2$ can deviate to $p_1 - \epsilon$  giving
him utility: $u_2(p_1, p_1-\epsilon) = p_1-\epsilon \geq 1-2\epsilon >
\epsilon$. If $p_1 \geq p_2 \geq p_1 - \epsilon$, then seller $1$ can deviate to
$ p_2 - 2\epsilon$ and get utility $u_1(p_2 - 2\epsilon, p_2) = 2(p_2 -
2\epsilon) \geq 2-8\epsilon \geq p_1 + (1-8\epsilon) > p_1 + \epsilon$. Now, if
$p_2 < p_1 - \epsilon$, then seller $2$ can increase his utility by raising his
price by $\epsilon$ and still be chosen.

Case (ii): $p_1 < 1-\epsilon$. If $p_2 < p_1$ or $p_1 <
\frac{1-\epsilon}{2}$, then $u_1(p) < 1-\epsilon$ and player $1$ can improve his
utility by more then $\epsilon$ by raising his price to $1$. If
$\frac{1-\epsilon}{2} \leq p_1 \leq p_2$, then $u_2(p) = 0$, but seller $2$ can
lower his price to $p_1 - \epsilon$, getting utility at least
$\frac{1-\epsilon}{2} - \epsilon \geq \epsilon$.
\end{example}

\subsection{Multiple services per seller}

\comment{
\mbcomment{We need to present POA and POS results that are independent of
computation, and then remark that the "positive" side, of achieving O(log n)
approximation is also poly-time computable.}}

Finally, we relax the assumption that each seller controls a
single service. Let $N$ be the total set of services and let $N = \cup_{i=1}^r
N_i$ be a disjoint partition ($N_i \cap N_{i'} =
\emptyset$ for $i \neq i'$) where $N_i$ represent the services controlled by
seller $i$. Given a buyer with valuation $v:2^N \rightarrow \R$ and a decision
map $X$, this defines a game between sellers where the strategy of each seller
is to set prices for each service in $N_i$, i.e., $p_i \in \R_+^{N_i}$. The
utility of seller $i$ is then given by:
$$u_i(p) = \textstyle\sum_{j \in N_i} p_j \cdot \one \{ j \in X(p) \}$$
Observe that with two goods that are perfect substitutes but controlled by the
same seller, their price won't go down to zero as in Bertrand competition, since
a seller won't be competing with himself.

We illustrate the subtleties of this variant by analyzing the special case of
the \emph{monopolistic seller}, i.e., where one seller (seller $1$) controls all
the services and posts prices for each of them individually. This boils down to
an optimization problem where the goal of the seller is to find a price vector
$p$ that optimizes his utility:
 $$u_1(p):=\sum_{j \in N} p_j \cdot \one \{ j \in X(p) \}$$

{Note that the constraint that the seller prices {\em individual services} and
not bundles might prevent the seller, although a monopolist,  from extracting
the entire surplus.}

First we show the case that for the case of submodular valuations, this can be
phrased as an optimization over sets:

\begin{lemma}\label{lemma:monopolistic_seller_format}
If the valuation $v$ is submodular and $X$ is maximal, then
$\max_{p \in \R^N_+} u_1(p) = \max_{S \subseteq N} \sum_{i \in S} v(i \vert S
\setminus i)$
\end{lemma}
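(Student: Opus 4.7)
I would prove the identity by showing both inequalities, using submodularity and the maximality of $X$.

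For the inequality $\max_p u_1(p) \leq \max_S \sum_{i \in S} v(i \mid S \setminus i)$, fix any price vector $p$ and let $S = X(p)$. Since $S \in D(v;p)$, for every $i \in S$ we have $v(S) - p(S) \geq v(S \setminus i) - p(S \setminus i)$, which rearranges to $p_i \leq v(S) - v(S \setminus i) = v(i \mid S \setminus i)$. Summing over $i \in S$ gives $u_1(p) = p(S) \leq \sum_{i \in S} v(i \mid S \setminus i) \leq \max_{T \subseteq N} \sum_{i \in T} v(i \mid T \setminus i)$.

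For the reverse inequality, given an arbitrary $S \subseteq N$, I would exhibit a price vector $p$ with $X(p) = S$ and $u_1(p) = \sum_{i \in S} v(i \mid S \setminus i)$. Set $p_i = v(i \mid S \setminus i)$ for $i \in S$ (which is nonnegative by monotonicity) and $p_i = M$ for $i \notin S$, where $M$ is chosen larger than $v(N)$ so that any set containing an element outside $S$ is strictly dominated by $S$ in the buyer's utility (using $v(S) - p(S) \geq 0$, verified below). To show $S$ is optimal for the buyer, I would verify for any $T \subsetneq S$ that $v(S) - p(S) \geq v(T) - p(T)$, i.e.\ $v(S) - v(T) \geq \sum_{i \in S \setminus T} v(i \mid S \setminus i)$. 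Writing $S \setminus T = \{i_1, \ldots, i_k\}$ and telescoping,
$$v(S) - v(T) = \sum_{j=1}^{k} v\bigl(i_j \mid T \cup \{i_1, \ldots, i_{j-1}\}\bigr) \geq \sum_{j=1}^{k} v(i_j \mid S \setminus i_j),$$
where the inequality uses submodularity and the containment $T \cup \{i_1, \ldots, i_{j-1}\} \subseteq S \setminus i_j$. The same telescoping with $T = \emptyset$ gives $v(S) \geq \sum_{i \in S} v(i \mid S \setminus i) = p(S)$, justifying the choice of $M$.

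This shows $S \in D(v;p)$ and that every demanded set is contained in $S$. Because $X$ is maximal, $X(p)$ must be the inclusion-maximal element of $D(v;p) \cap 2^S$, which is $S$ itself. Thus $u_1(p) = \sum_{i \in S} v(i \mid S \setminus i)$, completing the inequality. The main obstacle is the second direction, specifically the submodular telescoping estimate combined with invoking maximality of $X$ to rule out the buyer picking a proper subset of $S$ at a tie; everything else is bookkeeping.
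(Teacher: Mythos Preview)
Your proof is correct and follows the same approach as the paper's: bound $u_1(p)$ by $\sum_{i\in S} v(i\mid S\setminus i)$ using $S\in D(v;p)$, and for the converse price items in $S$ at their marginal values and items outside $S$ at infinity. The paper's proof simply asserts ``we get $X(p)=S$'' without justification; your telescoping submodularity argument and appeal to maximality of $X$ are exactly the details needed to fill that gap.
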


\begin{proof}
Given $p \in \R^N_+$, let $S = X(p)$. Since $S \in D(v;p)$ it must be the case
that $p_i \leq v(i \vert S \setminus i)$, so $u_1(p) = \sum_{i \in S} p_i \leq
\sum_{i \in S} v(i \vert S \setminus i)$. Conversely, given $S \subseteq N$,
setting prices $p_i = v(i \vert S \setminus i)$ for $i \in S$ and $p_i = \infty$
otherwise, we get $X(p) = S$ and $u_1(p) = \sum_{i \in S} v(i \vert S \setminus
i)$.
\end{proof}

{Since there are no costs, the welfare of this game can be defined as
$W(p) = v(X(p))$. We note that unlike the case where each seller controls one
good, the welfare in equilibrium can be far from efficient even if the buyer's
valuation is gross substitutes. Let $H_k = \sum_{j=1}^k \frac{1}{j} =
\Theta(\log n)$ be the $k$-th  harmonic number and define the valuation $v(S)$ such
that: $v(S) = 1+\epsilon$ if $\abs{S} = 1$ and $v(S) = H_{\abs{S}}$ if $\abs{S}
\geq 2$. This function is clearly submodular (in fact, it is also gross
substitutes). By Lemma \ref{lemma:monopolistic_seller_format}, the revenue in
a Nash equilibrium is of the format $\sum_{i \in S} v(i \vert S \setminus i)$
for $S = X(p)$, which is maximized for sets $S$ of size $1$. Therefore, for all
equilibria, the buyer buys at most one item, generating $1+\epsilon$ welfare,
while the optimal achievable welfare is $H_n$. Since this
is the unique equilibrium, this implies an logarithmic lower bound on the Price
of Anarchy and Price of Stability.}

{In what follows, we show that this bound is tight, i.e, for all Nash
equilibria the welfare is at least a $\Omega(1/\log n)$ fraction of the optimal
welfare. In other words, this implies an upper bound of $\log n$ of the Price of
Anarchy (and thus also on the Price of Stability).
We show that by proving that there is always a set $S$ such that
$\sum_{i \in S} v(i \vert S \setminus i) \geq \frac{1}{H_n} v(N)$. Therefore,
the seller can always secure that much revenue. Since the seller revenue is a
lower bound to the welfare we have:}

\begin{theorem}
If the valuation $v$ is submodular and $X$ is maximal, then for every pure Nash
equilibrium $p$ of the monopolistic seller game\footnote{since it is a
one-player
game, a Nash equilibrium corresponds to an optimal price vector}, $v(X(p)) \geq
\frac{1}{H_n} v(N)$. In other words, the Price of Anarchy is bounded by $H_n =
O(\log n)$. Moreover, there is a randomized polynomial-time algorithm that
finds a vector $p$ with $u_1(p) \geq  v(N) /2H_n $ with constant probability.
\end{theorem}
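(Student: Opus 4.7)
The plan is to combine Lemma~\ref{lemma:monopolistic_seller_format} with a simple averaging argument over random subsets. Since the monopolistic game is a single-player optimization, any Nash equilibrium $p$ maximizes $u_1$ and by the lemma $u_1(p) = \max_{S\subseteq N}\sum_{i\in S} v(i\mid S\setminus i)$. Moreover $v(X(p))\geq p(X(p))=u_1(p)$, because the empty bundle is always in the demand correspondence so $v(X(p))-p(X(p))\geq 0$. Hence both the Price-of-Anarchy bound and the algorithmic claim reduce to exhibiting (and efficiently producing) an $S$ with $\sum_{i\in S} v(i\mid S\setminus i)\geq v(N)/H_n$; the corresponding price vector is then obtained by setting $p_i=v(i\mid S\setminus i)$ for $i\in S$ and $p_i=\infty$ otherwise, exactly as in the proof of Lemma~\ref{lemma:monopolistic_seller_format}.

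To exhibit such an $S$ probabilistically, I consider the distribution that first draws $k\in\{1,\dots,n\}$ with probability $1/(kH_n)$ and then draws $S_k$ uniformly from the $k$-subsets of $N$. Write $f(k)=\E[v(S_k)]$ so that $f(0)=0$ and $f(n)=v(N)$, and let $R_k=\sum_{i\in S_k} v(i\mid S_k\setminus i)$. Viewing $S_k$ as the first $k$ elements of a uniformly random permutation $\pi$ and using symmetry across the $k$ positions,
\[
\E[R_k] \;=\; k\cdot \E\bigl[v(\pi_k\mid \{\pi_1,\dots,\pi_{k-1}\})\bigr] \;=\; k\bigl(f(k)-f(k-1)\bigr).
\]
The $1/k$ factor in the weight cancels the leading $k$, producing a telescoping sum:
\[
\E[R] \;=\; \sum_{k=1}^n \frac{1}{kH_n}\cdot k\bigl(f(k)-f(k-1)\bigr) \;=\; \frac{v(N)}{H_n}.
\]
Some $S$ in the support therefore meets the target, which establishes the Price-of-Anarchy bound.

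For the algorithm I run this sampling scheme and compute $R$ using polynomially many value-oracle queries. Submodularity yields the deterministic upper bound $R\leq v(S)\leq v(N)$, so Markov's inequality applied to $v(N)-R$ gives $\Prob[R\geq v(N)/(2H_n)]\geq 1/(2H_n-1)=\Omega(1/\log n)$. Repeating the sampling $O(H_n)$ times independently and returning the best trial then succeeds with constant probability, yielding a price vector with $u_1(p)\geq v(N)/(2H_n)$. The only non-routine step is the symmetrization identity $\E[R_k]=k(f(k)-f(k-1))$, which reduces to the symmetry of the uniform distribution on permutations; once that is in place, the telescoping computation and the Markov-plus-repetition boosting are both routine.
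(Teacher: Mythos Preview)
Your proof is correct and follows essentially the same route as the paper. The paper proves the key identity $\E\bigl[\sum_{i\in S} v(i\mid S\setminus i)\bigr]=v(N)/H_n$ by the combinatorial rewriting $\E_{|S|=k}\sum_{i\in S} v(S\setminus i)=k\,\tilde v(k-1)$ and then telescoping, whereas you obtain the same identity via the permutation-symmetry observation $\E[R_k]=k\bigl(f(k)-f(k-1)\bigr)$; the Markov-plus-repetition boosting (using $R\le v(N)$ from submodularity) is likewise the same as the paper's Lemma~\ref{lemma:high_prob}.
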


This is a consequence of the following lemma:

\begin{lemma}
\label{lemma:sample}
Given a function $v:2^N \rightarrow \R_+$, consider the following randomized
algorithm: pick a size $k \in \{1, \hdots, n\}$ with probability $(k \cdot
H_n)^{-1}$ and then pick a random set $S$ of size $k$. Then: $\E \left[
\sum_{i\in S} v(i \vert S \setminus i) \right] = \frac{1}{H_n} v(N)$.
\end{lemma}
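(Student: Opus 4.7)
The plan is to show the equivalent identity
\[
\sum_{k=1}^{n} \frac{f(k)}{k} \;=\; v(N), \qquad \text{where } f(k) := \E_{|S|=k}\!\Bigl[\textstyle\sum_{i\in S} v(i\mid S\setminus i)\Bigr],
\]
after which the lemma follows immediately by unpacking the two-stage sampling:
\[
\E\!\Bigl[\textstyle\sum_{i\in S} v(i\mid S\setminus i)\Bigr] \;=\; \sum_{k=1}^{n} \frac{1}{k H_n}\, f(k) \;=\; \frac{v(N)}{H_n}.
\]

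To establish the identity, I would use a random-permutation telescoping argument. Let $\pi$ be a uniformly random permutation of $N$, and set $S_k = \{\pi(1), \ldots, \pi(k)\}$ with $S_0 = \emptyset$. Then deterministically (by definition of marginals),
\[
v(N) \;=\; \sum_{k=1}^{n} v\bigl(\pi(k) \,\big|\, S_{k-1}\bigr).
\]
Taking expectation over $\pi$, the goal reduces to showing that $\E[v(\pi(k)\mid S_{k-1})] = f(k)/k$ for every $k$.

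For this, I would use the standard observation that $S_k$ is a uniformly random $k$-subset of $N$, and conditional on $S_k$, the element $\pi(k)$ is uniform over $S_k$ (since the restriction of a uniform permutation to its first $k$ positions yields a uniformly random ordered tuple whose last coordinate is uniform in the underlying set). Hence
\[
\E\bigl[v(\pi(k)\mid S_{k-1})\bigr] \;=\; \E_{|S_k|=k}\!\left[\frac{1}{k}\sum_{i\in S_k} v(i\mid S_k\setminus i)\right] \;=\; \frac{f(k)}{k}.
\]
Summing over $k$ yields $\sum_k f(k)/k = v(N)$, and combining with the first display completes the proof.

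I expect the main subtlety to be the symmetry step identifying $\pi(k)$ as uniform in $S_k$ given $S_k$; this requires one to carefully separate the randomness of the set $S_k$ from the randomness of which of its elements is the ``last'' one, but it is a standard exchangeability fact. Note that no assumption on $v$ beyond being set-valued is required (monotonicity or submodularity are not used), so the lemma holds in full generality, which is exactly what lets us convert it into the $O(\log n)$ approximation guarantee for the monopolistic seller in the ambient theorem.
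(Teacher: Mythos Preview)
Your proof is correct. The paper's argument is a direct double-counting computation: it writes $\sum_{i\in S} v(i\mid S\setminus i) = |S|\,v(S) - \sum_{i\in S} v(S\setminus i)$, defines $\tilde v(k) = \E_{|S|=k} v(S)$, and shows by counting that $\E_{|S|=k}\sum_{i\in S} v(S\setminus i) = k\,\tilde v(k-1)$; this yields $f(k)/k = \tilde v(k) - \tilde v(k-1)$, which then telescopes to $v(N)$. Your random-permutation argument reaches the same telescoping identity by a different route: the telescope $v(N) = \sum_k v(\pi(k)\mid S_{k-1})$ is written pointwise before averaging, and the step $\E[v(\pi(k)\mid S_{k-1})] = f(k)/k$ follows from the exchangeability observation that $\pi(k)$ is uniform in $S_k$ given $S_k$. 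Both arguments are equally short; yours avoids the explicit combinatorial identity $\frac{n-k+1}{\binom{n}{k}} = \frac{k}{\binom{n}{k-1}}$ by packaging it into the symmetry of a random permutation, which some readers may find more transparent. Neither proof uses any structural assumption on $v$, as you correctly note.
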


\begin{proof}
Observe that we can rewrite the expectation as:
$$\E \left[ \sum_{i\in S} v(i \vert S \setminus i) \right] = \sum_{k=1}^n
\frac{1}{k H_n} \E_{\abs{S} = k} \left[ k \cdot v(S) - \sum_{i \in S} v(S
\setminus i)\right]$$
Let $\tilde{v}(k) = \E_{\abs{S} = k} v(S)$. We note that:
$$\begin{aligned}& \E_{\abs{S} = k} \sum_{i \in S} v(S \setminus i) =
\sum_{S:\abs{S} = k} \frac{1}{{n \choose k}} v(S\setminus i) \\ & \quad =
\sum_{T:\abs{T} = k-1} \frac{n-k+1}{{n \choose k}} v(T)  =
\sum_{T:\abs{T} = k-1} \frac{k}{{n \choose k-1}} v(T) = k \tilde{v}(k-1)
\end{aligned}$$
Therefore:
$$H_n \cdot \E \left[ \sum_{i\in S} v(i \vert S \setminus i) \right] =
\sum_{k=1}^n \tilde{v}(k) - \tilde{v}(k-1) = \tilde{v}(n) = v(N)$$
\end{proof}

This in particular implies a polynomial time randomized approximation algorithm
for the optimization problem faced by the seller. We remark that since $u_1(p)$
is bounded from above by $v(N)$, we can get expected utility of
$\Omega(v(N)/H_n)$ with high probability by running the algorithm $\Theta(\log
n)$ and taking the best output. We make this statement precise in following
Lemma:

\begin{lemma}\label{lemma:high_prob}
Given a submodular function $v$ and $r(S) := \sum_{i \in S} v(i \vert S
\setminus i)$, then if $S_1, \hdots, S_k$ are independent samples according to
the procedure described in Lemma~\ref{lemma:sample} for $k= s \cdot H_n$,
then $\Prob[\max_i r(S_i) \geq \frac{v(N)}{2 H_n}] \geq  1-e^{-s/2}$.
\end{lemma}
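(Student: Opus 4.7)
The plan is to combine Lemma~\ref{lemma:sample}, which gives the first-moment bound $\E[r(S)] = v(N)/H_n$, with a Markov-type argument to show each sample succeeds with probability $\Omega(1/H_n)$, and then amplify by independence.

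First, I would show that $r(S) \le v(N)$ for every $S$. Enumerate $S = \{s_1,\dots,s_k\}$ in any order and write $v(S) = \sum_{i=1}^{k} v(s_i \mid \{s_1,\dots,s_{i-1}\})$. Since $\{s_1,\dots,s_{i-1}\} \subseteq S \setminus s_i$, submodularity gives $v(s_i \mid \{s_1,\dots,s_{i-1}\}) \ge v(s_i \mid S \setminus s_i)$. Summing and using monotonicity yields $r(S) \le v(S) \le v(N)$. Since $v$ is monotone, $r(S) \ge 0$, so $r(S) \in [0,v(N)]$.

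Next, apply Markov's inequality to the nonnegative random variable $v(N) - r(S)$, whose expectation is $v(N)(1 - 1/H_n)$ by Lemma~\ref{lemma:sample}. The event $r(S) < v(N)/(2H_n)$ is the event $v(N) - r(S) > v(N)(1 - 1/(2H_n))$, which therefore has probability at most $(1 - 1/H_n)/(1 - 1/(2H_n))$. A short calculation shows
\begin{equation*}
\Prob\!\left[r(S) \ge \frac{v(N)}{2H_n}\right] \ge 1 - \frac{1 - 1/H_n}{1 - 1/(2H_n)} = \frac{1}{2H_n - 1} \ge \frac{1}{2H_n}.
\end{equation*}

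Finally, for $k = s \cdot H_n$ independent samples $S_1,\dots,S_k$, the probability that none achieves $r(S_i) \ge v(N)/(2H_n)$ is at most $(1 - 1/(2H_n))^{s H_n} \le e^{-s/2}$, using $1-x \le e^{-x}$. This gives the desired bound $\Prob[\max_i r(S_i) \ge v(N)/(2H_n)] \ge 1 - e^{-s/2}$. The only mildly nonobvious step is the upper bound $r(S) \le v(N)$, which is needed to turn the first-moment bound into a per-sample constant probability of success via Markov; everything else is routine.
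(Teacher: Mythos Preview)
Your proof is correct and matches the paper's approach essentially line for line: both establish $r(S)\le v(N)$ from submodularity, use this together with $\E[r(S)]=v(N)/H_n$ to get $\Prob[r(S)\ge v(N)/(2H_n)]\ge 1/(2H_n)$, and then amplify by independence. The only cosmetic difference is that the paper writes the key step as the direct inequality $\E[r(S)]\le t\cdot\Prob[r(S)<t]+v(N)\cdot\Prob[r(S)\ge t]$ and solves for the tail probability, whereas you phrase the same computation as Markov applied to $v(N)-r(S)$; these are identical arguments.
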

\begin{proof}
Since $v$ is a submodular function $r(S) = \sum_{i \in S} v(i \vert S \setminus
i) \leq v(N)$ for all $S$. Therefore, for all $t \in [0, v(N)]$
$$\frac{v(N)}{H_n} = \E[r(S_i)] \leq t \cdot \Prob\left[ r(S_i)
 < t \right] + v(N) \cdot \Prob\left[ r(S_i) \geq t  \right]$$
so $\Prob\left[ r(S_i) \geq t  \right] \geq \frac{\frac{v(N)}{H_n} -t }{v(N) -
t} \geq \frac{1}{H_n} - \frac{t}{v(N)}$. Taking $t = \frac{v(N)}{2 H_n}$ we get
$\Prob \left[ r(S_i) \geq \frac{v(N)}{2 H_n}  \right] \geq \frac{1}{2 H_n} $.
Therefore, $$\Prob \left[ \max_i r(S_i) \geq \frac{v(N)}{2 H_n}  \right] \geq
1-(1-\frac{1}{2H_n})^{s H_n} \geq 1-e^{-s/2}$$
\end{proof}

Finally, we note that the optimization problem faced by the monopolistic seller
is a particular case of the \emph{Unique Coverage Problem} studied by
Demaine et al \cite{Demaine08}: given an universe set $U$ and subsets $Y_1, Y_2,
\hdots, Y_n$, find a collection of subsets that maximizes the number of elements
covered by \emph{exactly} one set. In other words, find $S \subseteq [n]$ in
order to maximize $\sum_{i \in S} \abs{Y_i \setminus \cup_{j \in S \setminus i}
Y_j}$. This is exactly the optimization problem faced by the
monopolistic seller when $v(S) = \abs{\cup_{i \in S} Y_i}$. Demaine et al
\cite{Demaine08} give a $O(\log n)$ approximation of the Unique Coverage problem
and show an $\Omega(\log^\sigma n)$ hardness of approximation for some constant
$\sigma > 0$ under suitable complexity assumptions.
This in particular implies logarithmic hardness of approximation for the monopolistic seller problem.



\section*{Acknowledgements}

We thank Balasubramanian Sivan for many comments on this
manuscript, specially for helping us to improve our Example
\ref{ex:unbounded_poa_submodular}.

\bibliographystyle{abbrv}
\bibliography{sigproc}

\columnsversion{}{
\appendix
\section{Proof of Existence of $\epsilon$-Nash}

In this appendix, we prove Theorem \ref{lemma:any_map}.
First, consider the following Lemma:

\begin{lemma}\label{lemma:pre_any_map}
 Given a combinatorial valuation $v$ and a maximal decision map $X$, if $X(p) =
S$, and $p^\epsilon_i = [p_i - \frac{\epsilon}{n}]^+$ for $i \in S$ and
$p^\epsilon_i = p_i$ otherwise, then for any decision map $X'$, and $S^\epsilon
= X'(p^\epsilon)$, it holds that $S \cap \{j; p_j > 0 \} \subseteq
S^\epsilon \subseteq S$.
\end{lemma}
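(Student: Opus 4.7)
The plan is to establish the two inclusions separately, exploiting the fact that $p^\epsilon$ differs from $p$ only on coordinates in $S$, where the new price is weakly lower.

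For the reverse inclusion $S^\epsilon \subseteq S$, I would argue by contradiction. Suppose $T \in D(v;p^\epsilon)$ with $T \not\subseteq S$. The identity
\[
p^\epsilon(T) - p^\epsilon(S) \;=\; p(T) - p(S) + \sum_{i \in S \setminus T} \bigl(p_i - [p_i - \epsilon/n]^+\bigr)
\]
combined with $S \in D(v;p)$ immediately yields $v(T) - p^\epsilon(T) \leq v(S) - p^\epsilon(S)$. Since $T \in D(v;p^\epsilon)$ this inequality is tight, which forces both (a) $v(T) - p(T) = v(S) - p(S)$, i.e.\ $T \in D(v;p)$, and (b) $p_i = 0$ for every $i \in S \setminus T$ (because each summand $p_i - [p_i - \epsilon/n]^+$ is strictly positive whenever $p_i > 0$). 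Now maximality of $X$ applied to the strict containment $S \subsetneq T \cup S$ gives $v(T \cup S) - p(T \cup S) < v(S) - p(S) = v(T) - p(T)$, which by (b) rearranges to $v(T \cup S) < v(T) + p(S \setminus T) = v(T)$, contradicting monotonicity of $v$.

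For the forward inclusion $S \cap \{j : p_j > 0\} \subseteq S^\epsilon$, fix $i \in S$ with $p_i > 0$ and assume for contradiction that $i \notin S^\epsilon$. The inclusion just proven gives $S^\epsilon \subseteq S$, so $i \in S \setminus S^\epsilon$. Using $S \in D(v;p)$ and $S^\epsilon \in D(v;p^\epsilon)$ in succession, one obtains
\[
p(S \setminus S^\epsilon) \;\leq\; v(S) - v(S^\epsilon) \;\leq\; p^\epsilon(S \setminus S^\epsilon).
\]
But coordinatewise $p^\epsilon_j \leq p_j$ on $S$, with strict inequality precisely when $p_j > 0$; the coordinate $i$ alone already forces $p(S \setminus S^\epsilon) > p^\epsilon(S \setminus S^\epsilon)$, a contradiction.

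The main obstacle is the first inclusion: the naive comparison only shows that $S$ remains in $D(v;p^\epsilon)$ and does not immediately rule out other optimal sets meeting $N \setminus S$. Pinning this down requires tracking exactly when the inequality is tight---which is precisely when the discarded coordinates of $S$ carry zero price---and then invoking \emph{both} the maximality of $X$ and the monotonicity of $v$ to derive a contradiction. The second inclusion is then a short bookkeeping argument built on top of the first.
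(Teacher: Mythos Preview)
Your proof is correct, and for the first inclusion it actually takes a different (and in one respect cleaner) route than the paper.

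For $S^\epsilon \subseteq S$, the paper argues via the second condition of Lemma~\ref{lemma:characterization}: for $j \in S^\epsilon \setminus S$ it uses $v(S)-p(S) \geq v(S^\epsilon)-p(S^\epsilon\setminus j)$, which is a \emph{Nash-equilibrium} condition on $p$. That hypothesis is not part of the lemma's statement (it happens to hold in the only place the lemma is applied). Your argument avoids this: from the identity for $p^\epsilon(T)-p^\epsilon(S)$ and tightness you extract that any offending $T$ must already lie in $D(v;p)$ with $p_i=0$ on $S\setminus T$, and then you combine maximality of $X$ (which rules out $T\cup S \in D(v;p)$) with monotonicity of $v$ to reach a contradiction. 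This proves the lemma exactly as stated, with no implicit appeal to $p\in\nash^X$.

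For the second inclusion $S\cap\{j:p_j>0\}\subseteq S^\epsilon$, your argument and the paper's are essentially the same bookkeeping: compare $v(S)-v(S^\epsilon)$ against $p(S\setminus S^\epsilon)$ and $p^\epsilon(S\setminus S^\epsilon)$ and note the strict drop at any coordinate with $p_i>0$.
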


\begin{proof}
First we show that if $S^\epsilon \setminus S \neq \emptyset$,
then $S^\epsilon \notin D(v;p^\epsilon)$, contradicting the definition of
$S^\epsilon$. Since $X$ is maximal, we must have $p_j > 0$ for all $j \in
S^\epsilon \setminus S$.
Thus, if  $S^\epsilon \setminus S \neq \emptyset$, then for $j\in S^\epsilon
\setminus S$, by the second condition in Lemma \ref{lemma:characterization},
$v(S) - p(S) \geq v(S^\epsilon) - p(S^\epsilon) + p_j> v(S^\epsilon) -
p(S^\epsilon)$ so it must also be the case that $v(S) - p^\epsilon(S) >
v(S^\epsilon) - p^\epsilon(S^\epsilon)$ since for $i \in S^\epsilon \setminus
S$, $p^\epsilon_i = p_i$. We conclude that if $S^\epsilon \setminus S \neq
\emptyset$, then $S^\epsilon \notin D(v;p^\epsilon)$. So, it must be the case
that $S^\epsilon \subseteq S$.

Now, we show that $S \cap \{j; p_j > 0\}\subseteq S^\epsilon$. If not, we show
that $S^\epsilon \notin D(v;p^\epsilon)$. Indeed, since $S \in D(v;p)$, then
$v(S) - p(S) \geq v(S^\epsilon) - p(S^\epsilon)$. Now, given that $p^\epsilon
\leq p$ and there is some $j \in S \setminus S^\epsilon$ with $p_j^\epsilon <
p_j$, $v(S) - p^\epsilon(S) > v(S^\epsilon) - p^\epsilon(S^\epsilon)$. 
\end{proof}

\begin{proofof}{Theorem \ref{lemma:any_map}}
Let $S = X(p)$ and let $p^\epsilon$ and $S^\epsilon$ by as in Lemma
\ref{lemma:pre_any_map}. We want to show that $p^\epsilon$ is an
$\epsilon$-Nash equilibrium.

First we consider $i \in S$. Since $S \cap \{j; p_j > 0\} \subseteq S^\epsilon$ (by Lemma \ref{lemma:pre_any_map}), we know that $u^{X'}_i(p^\epsilon) = p_i^\epsilon$. In order to increase his utility by more then $\epsilon$, it is necessary for him to deviate to $\tilde{p}_i > p_i^\epsilon + \epsilon$. Let $\tilde{p} = (\tilde{p}_i, p_{-i}^\epsilon)$. We argue that $i \notin \tilde{S} := X'(\tilde{p})$. Assume otherwise. We note that by the first condition in Lemma \ref{lemma:characterization}, there is $T \not\ni i$, such that $v(T) - p(T) = v(S) - p(S) \geq v(\tilde{S}) - p(\tilde{S})$. We note that: $$\begin{aligned} v(T) - \tilde{p}(T) & = v(T) - p^\epsilon(T) \geq v(T) - p(T)  \\ & \geq v(\tilde{S}) - p(\tilde{S}) > v(\tilde{S}) - \tilde{p}(\tilde{S})  \end{aligned}$$ where the last inequality follows from $\tilde{p}(\tilde{S}) = \tilde{p}_i + p^\epsilon(S' \setminus i) > (p_i + \epsilon - \frac{\epsilon}{n}) + (p(\tilde{S} \setminus i) - (n-1) \frac{\epsilon}{n}) = p(\tilde{S})$. Which contradicts
that $\tilde{S} \in D(v;\tilde{p})$, so it must be the case that $i \notin \tilde{S}$.

Now, consider a seller $i \notin S$. His utility $u^{X'}_i(p^\epsilon) = 0$ since $X'(p^\epsilon) \subseteq S$. In order to increase it by more then $\epsilon$, he needs to deviate to $\tilde{p}_i > \epsilon$. Again, we argue that for $\tilde{p} = (\tilde{p}_i, p_{-i}^\epsilon)$, $i \notin \tilde{S} := X'(\tilde{p})$. Assume otherwise. Since $i \in \tilde{S}$, by the second condition of Lemma \ref{lemma:characterization} we have $v(S) - p(S) \geq v(\tilde{S})-p(\tilde{S} \setminus i)$. Therefore: $$\begin{aligned} v(S) - p^\epsilon(S) & = v(S) - p(S) + [p(S) - p^\epsilon(S)] \\ & \geq v(\tilde{S}) - p(\tilde{S}\setminus i) + [p(\tilde{S} \setminus i) - p^\epsilon(\tilde{S} \setminus i)] \\ &  = v(\tilde{S})-p^\epsilon(\tilde{S} \setminus i) > v(\tilde{S})-\tilde{p}(\tilde{S})\end{aligned}$$
which contradicts the fact that $\tilde{S} \in D(v;\tilde{p})$.\\

In order to see that $S^\epsilon = X'(p^\epsilon)$ and $S = X(p)$ produce the
same welfare, notice that $S^\epsilon \subseteq S$ and all $j \in S \setminus
S^\epsilon$ are priced at zero. Therefore: $v(S \vert S^\epsilon) = 0$.
\end{proofof}

\comment{
Let $p \in \nash^X$ and $S = X(p)$. Now, define $p^\epsilon_i = [p_i -
\frac{\epsilon}{n}]^+$ for $i \in S$ and $p^\epsilon_i = p_i$ otherwise. Let $T
= X'(p^\epsilon)$. First we argue that $S \cap \{j; p_j > 0\} \subseteq T
\subseteq S$.
First we show that if $T \setminus S \neq \emptyset$,
then $T \notin D(v;p^\epsilon)$, contradicting the definition of $T$. Since $X$ is maximal, we must have $p_j > 0$ for all $j \in T \setminus S$.
Thus, if  $T \setminus S \neq \emptyset$, then for $j\in T \setminus S$, by the second condition in Lemma \ref{lemma:characterization},
$v(S) - p(S) \geq v(T) - p(T) + p_j> v(T) - p(T)$ so it must also be the case that $v(S) - p^\epsilon(S) > v(T) - p^\epsilon(T)$ since for $i \in T \setminus S$, $p^\epsilon_i = p_i$. We conclude that if $T \setminus S \neq \emptyset$, then $T \notin D(v;p^\epsilon)$. So, it must be the case that $T \subseteq S$.

Now, we show that $S \cap \{j; p_j > 0\}\subseteq T$. If not, we show that $T \notin D(v;p^\epsilon)$. Indeed, since $S \in D(v;p)$, then $v(S) - p(S) \geq v(T) - p(T)$. Now, given that $p^\epsilon \leq p$ and there is some $j \in S \setminus T$ with $p_j^\epsilon < p_j$,  $v(S) - p^\epsilon(S) > v(T) - p^\epsilon(T)$.

Now we argue that $p^\epsilon \in \nash_\epsilon^{X'}$. First consider $i \in S$, we know that $u'_i(p^\epsilon) = p_i^\epsilon$, where $u'_i$ refers to the utility in the game defined by $X'$. In order to increase his utility by more then $\epsilon$, it is necessary for him to deviate to $p'_i > p_i^\epsilon + \epsilon$. Let $p' = (p'_i, p_{-i}^\epsilon)$. We argue that $i \notin S' := X'(p')$. Assume otherwise. We note that by the first condition in Lemma \ref{lemma:characterization}, there is $T \not\ni i$, such that $v(T) - p(T) = v(S) - p(S) \geq v(S') - p(S')$. We note that: $$\begin{aligned} v(T) - p'(T) & = v(T) - p^\epsilon(T) \geq v(T) - p(T)  \\ & \geq v(S') - p(S') > v(S') - p'(S')  \end{aligned}$$ where the last inequality follows from $p'(S') = p'_i + p^\epsilon(S' \setminus i) > (p_i + \epsilon - \frac{\epsilon}{n}) + (p(S' \setminus i) - (n-1) \frac{\epsilon}{n}) = p(S')$. Which contradicts that $S' \in D(v;p')$, so it must be the case that $i \notin S'$.

Now, consider a seller $i \notin S$. His utility $u'_i(p^\epsilon) = 0$ since $X'(p^\epsilon) \subseteq S$. In order to increase it by more then $\epsilon$, he needs to deviate to $p'_i > \epsilon$. Again, we argue that for $p' = (p'_i, p_{-i}^\epsilon)$, $i \notin S' := X'(p')$. Assume otherwise. Since $i \in S'$, by the second condition of Lemma \ref{lemma:characterization} we have $v(S) - p(S) \geq v(S')-p(S' \setminus i)$. Therefore: $$\begin{aligned} v(S) - p^\epsilon(S) & = v(S) - p(S) + [p(S) - p^\epsilon(S)] \\ & \geq v(S') - p(S'\setminus i) + [p(S' \setminus i) - p^\epsilon(S' \setminus i)] \\ &  = v(S')-p^\epsilon(S' \setminus i) > v(S')-p^\epsilon(S')\end{aligned}$$

}

}

\end{document}